\begin{document}

\title{On Integer Programming With Almost Unimodular Matrices And
The Flatness Theorem for Simplices
}


\author{D. V. Gribanov         \and
        S. I. Veselov 
}


\institute{ D. V. Gribanov \at Lobachevsky State University of Nizhny Novgorod, 23 Gagarin Avenue, Nizhny Novgorod, Russian Federation, 603950,\\ Laboratory of Algorithms and Technologies for Networks Analysis, National Research University Higher School of Economics, 136 Rodionova, Nizhny Novgorod, Russian Federation, 603093,\\ 
\email{dimitry.gribanov@gmail.com}
\and
S. I. Veselov \at Lobachevsky State University of Nizhny Novgorod, 23 Gagarin Avenue, Nizhny Novgorod, Russian Federation, 603950,\\ \email{veselov@vmk.unn.ru}
}

\date{Received: date / Accepted: date}

\maketitle

\begin{abstract}
Let $A$ be an $(m \times n)$ integral matrix, and let $P=\{ x : A x \leq b\}$ be an $n$-dimensional polytope. The width of $P$ is defined as $w(P)=\min\{ c \in \mathbb{Z}^n\setminus\{0\} :\: \max\limits_{x \in P} c^\top x - \min\limits_{x \in P} c^\top x \}$. 
Let $\Delta(A)$ and $\delta(A)$ denote the greatest and the smallest absolute values of a determinant among all $r(A) \times r(A)$ sub-matrices of $A$, where $r(A)$ is the rank of the matrix $A$.

We prove that if every $r(A) \times r(A)$ sub-matrix of $A$ has a determinant equal to $\pm \Delta(A)$ or $0$ and $w(P)\ge (\Delta(A)-1)(n+1)$, then $P$ contains  $n$ affine independent integer points. Additionally, we have similar results for the case of \emph{$k$-modular} matrices. The matrix $A$ is called \emph{totally $k$-modular} if every square sub-matrix of $A$ has a determinant in the set $\{0,\, \pm k^r :\: r \in \mathbb{N} \}$.

When $P$ is a simplex and $w(P)\ge \delta(A)-1$, we describe a polynomial-time algorithm for finding an integer point in $P$.

Finally, we show that if $A$ is \emph{almost unimodular}, then integer program $\max \{c^\top x :\: x \in P \cap \mathbb{Z}^n \}$ can be solved by a polynomial-time algorithm. The matrix $A$ is called \emph{almost unimodular} if $\Delta(A) \leq 2$ and any $(r(A)-1)\times(r(A)-1)$ sub-matrix has a determinant from the set $\{0,\pm 1\}$.

\keywords{Empty Lattice \and Simplex \and Polytope \and Integer Programming \and Feasibility Problem \and Flatness Theorem \and Group Minimization }
\end{abstract}

\section{Introduction}
Let $A$ be an $m\times n$ integral matrix. Its $ij$-th element is
denoted by $A_{i\,j}$, $A_{i\,*}$ is the $i$-th row of $A$, and
$A_{*\,j}$ is the $j$-th column of $A$. For a vector $b\in
\mathbb{Z}^{n}$, by $P(A,b)$ (or by $P$) we denote the polyhedron
$\{ x \in \mathbb{R}^{n} : A x \leq b\}$. The set of all vertices of a polyhedron $P$ is denoted by $vert(P)$. 

Let $r(A)$  be the rank of the matrix $A$.
Let $\Delta(A)$ and $\delta(A)$ denote the greatest and the smallest absolute values of the determinant among all $r(A) \times r(A)$ sub-matrices of $A$. Let $\Delta_{lcm}(A)$ and $\Delta_{gcd}(A)$ be the least common multiple and greatest common divisor (resp.) of absolute values of a determinant among all $r(A) \times r(A)$ sub-matrices of $A$. 

Now we refer to the notion of the \emph{$k$-modular} matrices that has been introduced in \cite{Kotnyek}. The matrix $A$ is called totally $k$-modular if every square sub-matrix of $A$ has a determinant in the set $\{0,\, \pm k^r :\: r \in \mathbb{N} \}$. The matrix $A$ is called \emph{$k$-modular} if every $r(A) \times r(A)$ sub-matrix of $A$ has a determinant in the set $\{0,\, \pm k^r :\: r \in \mathbb{N} \}$.

Also we refer to the notion of \emph{almost unimodular} matrices that was introduced in \cite{PadbergConj_AlmostUnimod} for square case. The matrix $A$ is called \emph{almost unimodular} if $\Delta(A) \leq 2$ and any $(r(A)-1)\times(r(A)-1)$ sub-matrix has a determinant from the set $\{0,\pm 1\}$

For a matrix $B \in \mathbb{R}^{s \times n}$, $cone(B) = \{x:\: x =
B t,\, t \in \mathbb{R}^{n},\, t_i \geq 0 \}$ is a cone spanned by
columns of $B$ and $conv(B) = \{x:\: x = B t,\, t \in
\mathbb{R}^{n},\, t_i \geq 0,\, \sum_{i=1}^{n} t_i = 1  \}$ is the
convex hull spanned by columns of $B$.

For a vertex $v$ of $P(A,b)$, $N(v) = \{ x \in \mathbb{R}^n :\: A_{J\,*} x \leq b \}$, where $J = \{k :\: A_{k\,*} v = b \}$.

The following theorem was proved in  \cite{VeselovChirkovBimodEn2008}
\begin{theorem}\label{MainTh1} If every $ n\times n$ determinant of $A$ belongs to $\{-2,-1,0,1,2\}$ and $P(A,b)$ is full-dimensional, then  

1. $P(A,b) \cap \mathbb{Z}^n \not= \emptyset $.

2. One can check the emptiness of the set $P(A,\,b) \cap \mathbb{Z}^n$ by a polynomial-time algorithm.

3. For every row $a$ of $A$,
the problem  $max\{a^\top x : A x \leq b,\  x \in \mathbb{Z}^n \}$ can be solved by a polynomial-time algorithm.

4. For every $v \in vert(conv(P(A,b) \cap \mathbb{Z}^n))$ there exists $u \in vert(P)$, such that $v$ lies on some edge which contain $u$ ($v$ lies on an edge of $N(u)$).

5. If each $n \times n$ sub-determinant of $A$ is not equal to zero, then the problem $max\{c^\top x : A x \leq b,\  x \in \mathbb{Z}^n \}$ can be solved by a polynomial time algorithm.

\end{theorem}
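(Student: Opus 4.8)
The plan is to reduce every part to a local analysis at a vertex $v$ of $P$ via its tangent cone $N(v) = \{x : A_{J\,*} x \leq b_J\}$, where $J$ indexes a choice of $n$ rows active at $v$ with $B := A_{J\,*}$ nonsingular. The hypothesis forces $|\det B| \in \{1,2\}$, so the quotient group $G = \mathbb{Z}^n / B\mathbb{Z}^n$ has order at most $2$. When $|\det B| = 1$ the vertex $v = B^{-1} b_J$ is already integral and lies in $P$, so the whole difficulty is the case $|\det B| = 2$. Here I would use two facts: first, $v \in \tfrac12 \mathbb{Z}^n$ by Cramer's rule; second, the index-$2$ sublattice $B\mathbb{Z}^n$ is cut out inside $\mathbb{Z}^n$ by a single parity functional $\ell(\cdot) \bmod 2$, so $B\mathbb{Z}^n = \{z \in \mathbb{Z}^n : \ell(z) \equiv 0 \pmod 2\}$.

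The key lemma I would establish is that the integer points of $N(v)$ closest to the apex lie on the $1$-skeleton of the cone. Concretely, the points $x^{(i)} = v - (B^{-1})_{*\,i}$ satisfy $B x^{(i)} = b_J - e_i$, hence lie on the $i$-th edge of $N(v)$ and are integral exactly for those $i$ with $\ell(e_i) \equiv \ell(b_J) \pmod 2$; there are at most $n$ such candidates, all computable from $B^{-1}$ in polynomial time. This is essentially the content of statement 4: any $w \in vert(conv(P \cap \mathbb{Z}^n))$ maximizes some objective $c$, an LP reaches an optimal vertex $u$ of $P$, and the argument above places $w$ on an edge of $N(u)$. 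Feasibility of a candidate reduces to evaluating the remaining rows, governed by the ratios $A_{k\,*}(B^{-1})_{*\,i} = D_{ki}/\det B$, where $D_{ki}$ is again an $n \times n$ subdeterminant of $A$; by hypothesis these lie in $\{0, \pm\tfrac12, \pm1\}$, and this rigidity is precisely what keeps the candidate set small and the local geometry controllable.

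For statement 1 I would argue by descent on the vertices of $P$. Starting from any vertex with $|\det B| = 2$, if some candidate $x^{(i)}$ is feasible we are done; otherwise a row $k \notin J$ is violated, and combining $A_{k\,*}(B^{-1})_{*\,i} \in \{0, \pm\tfrac12, \pm1\}$ with $A_{k\,*} v \leq b_k$ shows that the edge of $P$ leaving $v$ in direction $-(B^{-1})_{*\,i}$ terminates at a new vertex $v'$ strictly before reaching $x^{(i)}$. Passing to $v'$ and repeating yields a process that, using full-dimensionality and the half-integrality of all vertices, either reaches an integral vertex or a feasible edge point. I expect this global step to be the main obstacle: the local lemma only produces candidates inside the tangent cone, and ruling out that \emph{every} candidate at \emph{every} vertex escapes $P$ is exactly where bimodularity must be invoked globally, either through a monotone potential that forbids cycling or through a minimal-counterexample argument that extracts a determinant exceeding $2$ from a thin half-integral configuration (the kind of configuration that larger determinants would permit).

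The algorithmic statements 2, 3 and 5 then follow from this group-minimization viewpoint. For an objective $c$ one solves the LP relaxation to obtain an optimal vertex $v$ and basis $B$; if $|\det B| = 1$ the LP optimum is integral, and otherwise the integer optimum is attained at one of the $\leq n$ edge candidates $x^{(i)}$, enumerated and feasibility-tested in polynomial time, so emptiness (statement 2) is the feasibility variant of the same routine. In statement 3 the objective is a facet direction $A_{i\,*}$, so the optimal face is itself a lower-dimensional polytope whose inequality matrix inherits $\Delta \leq 2$, allowing recursion without any nondegeneracy assumption. In statement 5 the hypothesis that no $n \times n$ subdeterminant vanishes guarantees that every basis is nonsingular, so the order-$\leq 2$ group relaxation at the LP-optimal vertex is exact and the candidate search stays bounded for an arbitrary objective; this is what separates it from the degeneracy-tolerant statements 1--4.
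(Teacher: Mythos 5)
First, a point of reference: this paper does not prove Theorem~\ref{MainTh1} at all --- it is quoted verbatim from Veselov and Chirkov \cite{VeselovChirkovBimodEn2008}. So there is no in-paper proof to compare against; your proposal can only be judged on its own merits and against the way the paper later uses the theorem (Section 4) and the Gomory corner-polyhedron machinery it develops (Section 3). Your local analysis is exactly that machinery and is sound: with a basis $B=A_{J\,*}$ at a vertex, $|\det B|\in\{1,2\}$, the group $\mathbb{Z}^n/B\mathbb{Z}^n$ has order at most $2$, integrality of points on the edges of $N(v)$ is a single parity condition, and the Cramer ratios $A_{k\,*}(B^{-1})_{*\,i}\in\{0,\pm\tfrac12,\pm1\}$. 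This matches Lemma~\ref{GomoryExistenceLm} and Theorem~\ref{GomoryVertTh} in spirit.

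However, the proposal has two genuine gaps. The first you concede yourself: for statement~1, the entire difficulty is showing that some integer point of the cone $N(v)$ survives the \emph{remaining} inequalities of $P$, and your ``descent'' step is left as a hope (``either through a monotone potential \dots or through a minimal-counterexample argument''). That step is the theorem; nothing in the proposal rules out the process cycling among half-integral vertices or every candidate at every vertex escaping $P$. Note also that full-dimensionality, which is essential (a rational segment with no lattice point satisfies the determinant hypothesis), is never actually invoked in your argument --- a reliable sign the global step is missing.

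The second gap is a false strengthening that contaminates statements 2--5. Gomory irreducibility bounds the vertices of $conv(N(v)\cap\mathbb{Z}^n)$, \emph{not} those of $conv(P\cap\mathbb{Z}^n)$. A vertex $w$ of the integer hull of $P$ that lies on an edge (ray) of $N(u)$ need not be the depth-one point $x^{(i)}=u-(B^{-1})_{*\,i}$: when $x^{(i)}\notin P$, the point $w$ sits at odd depth $3,5,\dots$ along that ray. Consequently, (i) your derivation of statement~4 fails, since being a $c$-maximizer over $P\cap\mathbb{Z}^n$ does not make $w$ a vertex of the corner polyhedron of $N(u)$, which is what your argument needs; and (ii) enumerating the $\leq n$ points $x^{(i)}$ and feasibility-testing them is not a valid algorithm for statements 2, 3, 5. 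The identity $\max\{c^\top x : x\in P\cap\mathbb{Z}^n\}=\max\{c^\top x : x\in N(v)\cap\mathbb{Z}^n\}$ that you rely on is asserted, not proved, and for arbitrary $c$ it is essentially equivalent to the hard content of the theorem --- which is precisely why parts 3 and 5 restrict the objective to a row of $A$ or impose nonvanishing of all $n\times n$ minors, rather than claiming polynomiality for general bimodular integer programs.
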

The remarkable result was obtained by V.E. Alekseev and D.V.
Zakharova in \cite{AlekseevZaharova} for $\{ 0,1 \}$-matrices.

\begin{theorem}\label{MainThMalishev}
Let $A \in \{ 0,1 \}^{m \times n}$, $b\in \{ 0,1\}^m$, $c\in
\{0,1\}^n$. Let all rows of $A$ have at most
2 ones. Then, the problem $max\{c^{\top}x : A x \leq b,\  x \in
\mathbb{Z}^n \}$ can be solved by a polynomial-time algorithm when $\Delta (\binom{~c^\top}{A})$ is fixed.
\end{theorem}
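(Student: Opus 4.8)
The starting point is to read the constraint matrix combinatorially. Since $A\in\{0,1\}^{m\times n}$ and each row carries at most two ones, every two-one row with ones in columns $i,j$ encodes an edge $\{i,j\}$, every one-one row encodes a pendant constraint, and zero rows are vacuous; let $G$ be the resulting graph on the vertex set $\{1,\dots,n\}$, so that (up to transposition) $A$ is the vertex--edge incidence matrix of $G$ together with some pendant rows. The plan rests on the classical evaluation of the subdeterminants of an incidence matrix: a square submatrix corresponds to a balanced choice of $r$ edges on $r$ vertices, and its determinant is $\pm 2^{t}$, where $t$ is the number of connected components of the chosen edge set that are unicyclic with an \emph{odd} cycle, and it is $0$ otherwise. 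Maximising over maximal-rank submatrices gives $\Delta(A)=2^{\nu}$, where $\nu$ is (essentially) the odd cycle packing number of $G$, i.e. the largest number of vertex-disjoint odd cycles that extend to a spanning odd pseudoforest. Thus the operative consequence of the hypothesis will be that $\nu\le \log_2\Delta$ is a fixed constant.

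First I would decompose $G$ into connected components, which decouples the variables and makes the objective $c^\top x$ separable, so that it suffices to treat one component at a time and sum the optima. A bipartite component yields a totally unimodular block, hence an integral residual polytope solvable directly by linear programming. A non-bipartite component has full column rank, so over its vertex set the incidence matrix and its augmentation by the restriction of $c^\top$ have the \emph{same} rank; block-diagonality then forces $\Delta$ of the component's incidence matrix to be at most $\Delta\binom{c^\top}{A}$, which bounds the odd cycle packing number of that component by $\log_2\Delta$. In parallel I would put the optimisation into combinatorial form: each coordinate with $c_i=0$ can be decreased until its constraints become slack and then eliminated, boundedness is tested (a $c_i=1$ coordinate not capped by any binding chain certifies $+\infty$), and the remaining coordinates are confined to a polynomially bounded integer box, so that the optimum is attained at an integer point of controlled size. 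What remains is a packing/stable-set-type integer program on a graph whose odd cycle packing number is a fixed constant.

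The core of the argument is that a graph with bounded odd cycle packing number is ``almost bipartite'' and the associated integer program is therefore tractable. Concretely I would invoke the Erd\H{o}s--P\'osa property for odd cycles: there is a function $g$ so that bounded odd cycle packing number forces an odd-cycle transversal $T$ with $|T|\le g(\nu)$, i.e. $G-T$ is bipartite. For fixed $\Delta\binom{c^\top}{A}$ the quantity $g(\nu)$ is a constant, so one enumerates candidate transversals $T$ in $n^{O(g(\nu))}$ steps; for each $T$ one fixes the (boundedly many) integer values of $x_T$ inside the box, substitutes them into the constraints, and is left with an integer program supported on the bipartite graph $G-T$. Its incidence matrix is totally unimodular, so the residual polytope is integral and the residual maximisation is solved exactly by linear programming in polynomial time. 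Returning the best value over all $T$ and all $x_T$, and summing over components, solves the original problem in time polynomial in the input once $\Delta\binom{c^\top}{A}$ is fixed.

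The main obstacle is the gap between the odd cycle \emph{packing} number, which is what $\Delta$ controls, and the odd cycle \emph{transversal} number, which the bipartite reduction consumes: the Erd\H{o}s--P\'osa bound $g(\nu)$ is superlinear, and producing a small transversal efficiently (rather than brute-forcing all $\binom{n}{g(\nu)}$ sets) is the computationally dominant and delicate step; the clean route is to quote, in place of blind enumeration, a polynomial-time algorithm that outputs such a transversal, or directly an $n^{O(\nu)}$ stable-set algorithm parameterised by the packing number. Two secondary technical points must also be handled with care: the precise incidence dictionary for the pendant rows and for the possibly high-degree objective row $c^\top$ (note that appending a row to $A$ can, in degenerate low-rank situations, even \emph{decrease} $\Delta$, which is exactly why the component decomposition into full-rank non-bipartite blocks is needed to transfer the hypothesis to the $A$-part); and the verification that total unimodularity, hence integrality of the residual linear program, genuinely survives the substitution of $x_T$ and, if one instead prefers the threshold-feasibility reduction testing $\{x:Ax\le b,\ -c^\top x\le -v\}$ whose matrix is precisely $\binom{-c^\top}{A}$, that this is what makes the hypothesis natural for the augmented matrix.
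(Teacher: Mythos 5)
First, a point of comparison that matters for calibration: the paper does not prove Theorem~\ref{MainThMalishev} at all --- it is quoted as a known result of Alekseev and Zakharova \cite{AlekseevZaharova}, so there is no internal proof to match against and your argument has to stand on its own. It does not, because its load-bearing step is false. Your translation of fixed $\Delta$ into a fixed odd cycle packing number $\nu$ via the determinant dictionary for incidence matrices is fine (this is the classical fact, cf.\ \cite{GrossmanKikarniSchoc}, that a nonsingular maximal square submatrix has determinant $\pm 2^{t}$ with $t$ the number of odd unicyclic components). But you then ``invoke the Erd\H{o}s--P\'osa property for odd cycles'' to extract an odd cycle transversal of size $g(\nu)$, and odd cycles do \emph{not} have the Erd\H{o}s--P\'osa property in general graphs: by B.~Reed's theorem (\emph{Mangoes and blueberries}, Combinatorica 19 (1999)) there are graphs --- Escher walls --- with no two vertex-disjoint odd cycles whose odd cycle transversal number is arbitrarily large. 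Hence no function $g$ exists, and the $n^{O(g(\nu))}$ enumeration of transversals, on which your entire bipartite/totally-unimodular reduction rests, collapses. You flag this yourself as ``the main obstacle,'' but the fallback you offer --- to quote an $n^{O(\nu)}$ stable-set algorithm parameterised by the packing number --- is not a quotable elementary fact: polynomial solvability of the stable set problem under bounded odd cycle packing number is a much later and much deeper theorem (Conforti, Fiorini, Huynh, Joret, Weltge, 2020) resting on graph-minors structure theory, and in 2010, when the result you are proving appeared, it was open. As written, the proposal therefore has a genuine hole rather than a routine citation to fill.

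The underlying diagnosis is that you discard most of the hypothesis. The theorem bounds $\Delta$ of the \emph{augmented} matrix $\binom{c^\top}{A}$, and you use this only to bound $\Delta$ of the $A$-part on non-bipartite components, i.e.\ only the packing number. The appended row creates many further large minors that have nothing to do with odd cycles: for a subtree $T$ with bipartition $(V_1,V_2)$ all of whose vertices carry $c_i=1$, the square submatrix formed by the $c$-row and the edge rows of $T$ (restricted to the columns $V(T)$) has determinant $\pm(|V_1|-|V_2|)$; already a star $K_{1,t}$ forces $\Delta\binom{c^\top}{A}\geq t-1$. So fixed $\Delta$ of the augmented matrix forbids unbalanced subtrees and bounds degrees --- structure entirely invisible at the level of the odd cycle packing number, and precisely the kind of leverage an elementary proof such as that of \cite{AlekseevZaharova} can exploit to bypass the (false) packing-to-transversal bridge. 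Your secondary steps also need repair --- the ``polynomially bounded integer box'' is asserted, not proved, and the problem can genuinely be unbounded (drive a variable to $-\infty$ to free its neighbours, e.g.\ on a path), so unboundedness detection must be argued, though on totally unimodular blocks linear programming handles it --- but these are fixable; the Erd\H{o}s--P\'osa step is not.
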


Width of $P$ is defined as
$$ w(P)=\min\{c\in \mathbb{Z}^n\setminus\{0\} :\: \max\limits_{x \in P} c^\top x - \min\limits_{x \in P} c^\top x \}.$$

Now we refer to the classical flatness theorem due to Khinchine \cite{Khinchine}. Let $P$ be a convex body. Khinchine shows that if $P \cap \mathbb{Z}^n = \emptyset$, then $w(\,P\,) \leq f(n)$, where $f(n)$ is a value that depends only on a dimension. There are many estimates on $f(n)$ in the works \cite{FlatnThviaLocalThOfBanach,Banasz,Dadush,KannLovasz,Khinchine,Rudelson}. There is a conjecture claiming that $f(n) = O(n)$ \cite{Dadush,KannLovasz}. The best known upper bound on $f(n)$ is $O(n^{4/3}\, log^c(n))$ due to Rudelson \cite{Rudelson}, where $c$ is some constant that does not depend on $n$.

The paper \cite{Gribanov_General} contains an estimate of the width for a special class of polytopes.

\begin{theorem}\label{StrictKMod_PolyWidthT}
Let $A \in \mathbb{Z}^{m \times n}$, $b \in \mathbb{Z}^{m}$, $P(A,b)$ be a polytope and every $r(A) \times r(A)$ sub-determinant of matrix $A$ is equal to $\pm \Delta(A)$ or 0. If $w(P(A,b)) > (\Delta(A)- 1)\,(n+1)$, then $|P(A,b) \cap \mathbb{Z}^{n}| \geq n + 1$. Moreover we can find an integer point in $P(A,b) \cap \mathbb{Z}^{n}$ using a polynomial time algorithm.
\end{theorem}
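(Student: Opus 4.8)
The plan is to linearise away the determinant $\Delta$ by an integral change of coordinates that turns $A$ into a matrix all of whose maximal minors lie in $\{0,\pm1\}$, and then to reduce the whole problem to locating points of an index-$\Delta$ sublattice inside an \emph{integral} polytope of large lattice width.

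Since $P(A,b)$ is $n$-dimensional we have $r(A)=n$, so the maximal minors are the $n\times n$ ones. Fix a vertex $v$ and let $B$ be the nonsingular $n\times n$ submatrix formed by $n$ linearly independent tight rows, so $|\det B|=\Delta$ and, by Cramer's rule, $\Delta\,v\in\mathbb Z^n$; in particular every vertex lies in $\tfrac1\Delta\mathbb Z^n$. First I would perform the substitution $y=Bx$. Writing $M=AB^{-1}$, each entry of $M$ equals a ratio $\det(\tilde A)/\det B$ with $\tilde A$ an $n\times n$ submatrix of $A$, hence lies in $\{0,\pm1\}$; the same computation shows every maximal minor of $M$ is a ratio of an $n\times n$ minor of $A$ by $\det B$, so it also lies in $\{0,\pm1\}$. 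Thus $Q=\{y:My\le b\}$ is affinely isomorphic to $P$ and, because $M$ is unimodular, its vertices lie in $\mathbb Z^n$: $Q$ is an \emph{integral} polytope. Integer points of $P$ correspond exactly to points of $Q$ in the lattice $\Lambda=B\mathbb Z^n$, which has index $[\mathbb Z^n:\Lambda]=\Delta$, and passing to the dual lattice shows the lattice width is preserved, $w(P)=w_\Lambda(Q)>(\Delta-1)(n+1)$.

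The core is then a purely lattice statement: an integral polytope $Q$ of $\Lambda$-width exceeding $(\Delta-1)(n+1)$, with $\Lambda\subseteq\mathbb Z^n$ of index $\Delta$, contains $n+1$ affinely independent points of $\Lambda$. For the existence of one such point I would work at a vertex $v_Q$ of $Q$ and bring the tight submatrix of $M$ into Hermite normal form by unimodular column operations, so that in suitable integral coordinates the tangent cone is lower triangular with positive diagonal $d_1,\dots,d_n$ and $\prod_i d_i=\Delta$. A greedy rounding down the triangular system produces a $\Lambda$-point of the cone whose coordinatewise depth $w$ satisfies $\sum_i w_i\le\sum_i(d_i-1)\le\Delta-1$, the last inequality being the elementary $\prod_i d_i\ge 1+\sum_i(d_i-1)$ for $d_i\ge1$. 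To upgrade this to $n+1$ affinely independent points I would instead produce a single $\Lambda$-point $y_0$ lying \emph{deep} inside $Q$ and adjoin its neighbours $y_0+\lambda_1,\dots,y_0+\lambda_n$ along a basis $\lambda_1,\dots,\lambda_n$ of $\Lambda$: a large $\Lambda$-width forces a correspondingly large inner lattice radius, so all these points stay in $Q$, and they are affinely independent by construction.

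The main obstacle is exactly this feasibility/quantitative step: a point produced by rounding at a vertex can leave $P$ through a far facet, and an individual edge of $Q$ may be short even when $Q$ is globally fat, so the naive ``round at a vertex'' estimate does not suffice for a general, non-simplicial polytope. The hypothesis $w(P)>(\Delta-1)(n+1)$ is calibrated precisely to overcome this: it must be shown to bound the covering radius of $\Lambda$ against the relevant facet distances (equivalently, in the simplex case, to force $\sum_i w_i/\ell_i<1$, where the $\ell_i$ are the edge lengths of a spanning simplex), and this is where the factors $\Delta-1$ and $n+1$ enter. Finally, every ingredient---computing $B$ and the Hermite/Smith normal forms, solving one linear program to obtain a deep vertex or centre, and the coset rounding---is polynomial in the input size, which yields the claimed polynomial-time algorithm for finding an integer point.
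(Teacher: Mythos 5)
Your reduction is sound as far as it goes: passing to $y=Bx$ does make all maximal minors of $M=AB^{-1}$ lie in $\{0,\pm1\}$, so $Q=\{y: My\le b\}$ is integral, and integer points of $P$ do correspond to points of the index-$\Delta$ sublattice $\Lambda=B\mathbb{Z}^n$ in $Q$, with the lattice width preserved. But the proof stops exactly where the theorem begins. The core claim --- that $w_\Lambda(Q)>(\Delta-1)(n+1)$ forces $n+1$ affinely independent $\Lambda$-points inside $Q$ --- is never established. Your greedy/Hermite rounding at a vertex only places a $\Lambda$-point in the \emph{tangent cone} with small coordinate depth; as you yourself concede, such a point can exit $Q$ through a far facet, and this is fatal for non-simplicial polytopes. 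The fallback, ``a large $\Lambda$-width forces a correspondingly large inner lattice radius, so $y_0$ and its $n$ basis translates stay in $Q$,'' is precisely a covering-radius-versus-width statement, i.e.\ a flatness-type theorem; quantifying it with the specific constant $(\Delta-1)(n+1)$ is the entire content of the result, and no argument for it is given. So the proposal is a (correct) change of variables followed by an acknowledged gap at the decisive step.

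For comparison, the paper avoids covering radii altogether. It rounds the right-hand side down: $b'_i=b_i-(b_i \bmod \Delta)$, and considers $P'=P(A,b')$. The key tool is a Lipschitz bound on width as a function of the right-hand side (Lemma \ref{Lm3}): writing the width via LP duality as a minimum of $b^{\top}(y+z)$ over the cone $\{A^{\top}y+A^{\top}z=0,\ y,z\ge 0,\ A^{\top}y\in\mathbb{Z}^n\setminus\{0\}\}$, whose generators have entries at most $\Delta/\Delta_{gcd}$ (Cramer) and at most $n+1$ nonzeros (Carath\'eodory), one gets $|w(P(A,b'))-w(P(A,b))|\le \frac{\Delta}{\Delta_{gcd}}(n+1)\,\|b'-b\|_\infty$. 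Since $\|b-b'\|_\infty\le\Delta-1$, the hypothesis gives $w(P')>0$, so $P'$ is a full-dimensional polytope; by Cramer's rule its vertices are integral (every tight $n\times n$ subsystem has determinant $\pm\Delta$ and right-hand sides divisible by $\Delta$), and $P'\subseteq P$ yields the $n+1$ affinely independent integer points, one of which is found by a single run of a polynomial LP algorithm. If you want to salvage your lattice-geometric route, you would need to prove a quantitative statement of exactly this strength; the duality/perturbation lemma is the missing ingredient that makes the constant $(\Delta-1)(n+1)$ actually work for arbitrary polytopes.
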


We give here another proof of this result.

An interesting problem is estimating of $f(n)$ for empty lattice simplices \cite{FlatnThviaLocalThOfBanach,Haase_Ziegler,Kantor,IntroToEmptyLatticeSimpl}. A simplex $S$ is called empty lattice if $vert(S) \subseteq \mathbb{Z}^n$ and $S \cap \mathbb{Z}^n \setminus vert(S) = \emptyset$. Best known estimate of $f(n)$ for \emph{ the empty lattice} simplices is $O(n\,log(n))$ due to \cite{FlatnThviaLocalThOfBanach}. 

In this paper we will prove that the width of a simplex (not necessary with integer vertices) without lattice points is at most $\delta(A) - 1$, where $A$ is the restriction matrix of the simplex. Moreover, if its width is at least $\delta(A) - 1$, then we can find an integer point in the simplex by a polynomial-time algorithm presented in this paper. 

The authors consider this paper as a part of general problem for finding out critical values of parameters, when a given problem changes complexity. For example, the integer programming problem is polynomial time solvable on polyhedrons with integer vertices, due to \cite{Khachiyan}. On the other hand, it is NP-complete in the class of polyhedrons with denominators of extreme points equal $1$ or $2$, due \cite{Padberg}. The famous $k$-satisfiability problem is polynomial for $k \leq 2$ but is NP-complete for all $k > 2$. In the papers \cite{Malyshev4,Malyshev5} some graph parameters (the density and packing number) were considered and it was described how its growth in terms of the number of vertices  affects on the complexity of the independent set problem. A theory, when an NP-complete graph problem becames easier, is investingated appling to the family of hereditary classes in the papers \cite{AlekseevMalyshev1,AlekseevMalyshev2,Malyshev1,Malyshev2,Malyshev3,Malyshev6,Malyshev7,Malyshev8}. Our main interest is to determine a dependence of the integer programming problem complexity on spectrum of sub-determinants of the restriction matrix. 

\section{ The polytopes with bounded determinants }

The main result of this section is 

\begin{theorem}\label{StrictKMod_PolyWidthT}
Let $A \in \mathbb{Z}^{m \times n}$, $b \in \mathbb{Z}^m$ and $P = P(A,b)$ be a polytope. If $w(P) > (\Delta_{lcm}(A)-1)\cfrac{\Delta(A)}{\Delta_{gcd}(A)}(n+1)$, then $|P(A,b) \cap \mathbb{Z}^{n}| \geq n + 1$. We can find an integer point in $P(A,b) \cap \mathbb{Z}^{n}$ by a polynomial-time algorithm.
\end{theorem}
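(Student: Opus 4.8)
The plan is to combine three ingredients: a coordinate change that linearises the sub-determinant structure, a covering argument in the finite group $\mathbb{Z}^n/B\mathbb{Z}^n$, and a width-to-slack estimate that supplies the factor $(n+1)$. Throughout I assume $P$ is full-dimensional (so $r(A)=n$); the lower-dimensional case follows after restricting to the affine hull of $P$. First I would fix a nonsingular $n\times n$ submatrix $B=A_{J\,*}$ and pass to the coordinates $y=Bx$. For every row $A_{l\,*}$ of $A$, Cramer's rule gives
\[
(A_{l\,*}B^{-1})_k=\frac{\det \tilde B_{k\to l}}{\det B},
\]
where $\tilde B_{k\to l}$ is $B$ with its $k$-th row replaced by $A_{l\,*}$. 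Both numerator and denominator are $n\times n$ minors of $A$, so the numerator is a multiple of $\Delta_{gcd}(A)$ of absolute value at most $\Delta(A)$, while $|\det B|\ge\delta(A)\ge\Delta_{gcd}(A)$ and $|\det B|$ divides $\Delta_{lcm}(A)$. Hence every entry of $\bar A:=AB^{-1}$ is bounded in absolute value by $\Delta(A)/\Delta_{gcd}(A)$; this is the sole source of that factor. In the special case where all minors lie in $\{0,\pm\Delta\}$ one gets $\bar A\in\{0,\pm1\}^{m\times n}$, recovering the earlier theorem.

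Next I would locate the integer point by rounding inside the cone. An integer $x$ corresponds to $y\in B\mathbb{Z}^n$, a sublattice of index $|\det B|$ dividing $\Delta_{lcm}(A)$, so the finite group $G:=\mathbb{Z}^n/B\mathbb{Z}^n$ has order $|\det B|\le\Delta_{lcm}(A)$. Because the images $\bar e_1,\dots,\bar e_n$ generate $G$, a breadth-first (sub-semigroup) argument shows that every residue class can be written as $\sum_k\gamma_k\bar e_k$ with $\gamma_k\in\mathbb{Z}_{\ge0}$ and $\sum_k\gamma_k\le|G|-1\le\Delta_{lcm}(A)-1$. Applying this to the residue of an integer corner $c$ in $y$-space produces a lattice point $y_0=c-\gamma\in B\mathbb{Z}^n$ with $y_0\le c$ and $\sum_k\gamma_k\le\Delta_{lcm}(A)-1$; translating back, $x_0=B^{-1}y_0\in\mathbb{Z}^n$ automatically satisfies the constraints indexed by $J$, and for every other row $A_{l\,*}x_0=(\bar A c)_l-\sum_k\bar A_{lk}\gamma_k$ differs from the corner value $(\bar A c)_l$ by at most $\frac{\Delta(A)}{\Delta_{gcd}(A)}\sum_k\gamma_k\le(\Delta_{lcm}(A)-1)\frac{\Delta(A)}{\Delta_{gcd}(A)}=:S$. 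Thus $x_0\in P$ as soon as the corner has slack at least $S$ in every remaining facet.

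It remains to manufacture such a corner, and this is where the width hypothesis and the factor $(n+1)$ enter, and where I expect the main difficulty. The quantity $\psi_l(x):=b_l-A_{l\,*}x\ge0$ satisfies $\max_P\psi_l-\min_P\psi_l=$ (width of $P$ in the integral direction $A_{l\,*}$) $\ge w(P)$; using an inscribed simplex and its barycenter one shows $P$ contains a point whose slack in every facet is at least $w(P)/(n+1)$. Under the hypothesis $w(P)>(\Delta_{lcm}(A)-1)\frac{\Delta(A)}{\Delta_{gcd}(A)}(n+1)$ this slack exceeds $S$, and rounding this deep point, rather than an arbitrary vertex, to an integral corner in $y$-space keeps all slacks above $S$, so the construction of the previous paragraph yields $x_0\in P\cap\mathbb{Z}^n$. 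Since the deep point has slack strictly above $S$ in all directions, applying the same rounding to $n+1$ affinely independent nearby anchors (or a direct dimension count on the rounded neighbourhood) produces $n+1$ affinely independent integer points, giving $|P\cap\mathbb{Z}^n|\ge n+1$.

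The hardest technical points are (i) proving the barycentric slack bound with the exact constant $(n+1)$ for a general polytope, and (ii) reconciling the rounding of a deep interior point with the corner/coset construction without losing the clean constant, since the floor operation that produces an integral corner must not spoil the bound $S$. Finally, every step—extracting $B$, computing the Smith normal form of $B$ to describe $G$ and the short representation $\gamma$, and solving a linear program for the deep point—runs in polynomial time, so a point of $P\cap\mathbb{Z}^n$ is produced by a polynomial-time algorithm.
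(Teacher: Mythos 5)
Your ingredients are individually sound: the entry bound $|(AB^{-1})_{l\,k}|\le \Delta(A)/\Delta_{gcd}(A)$ via Cramer's rule is correct, the coset representation $\sum_k\gamma_k\bar e_k$ with $\sum_k\gamma_k\le|\mathbb{Z}^n/B\mathbb{Z}^n|-1\le\Delta_{lcm}(A)-1$ is a correct Gomory-style argument, and your depth claim is also true (it follows from Helly's theorem applied to the sets $\{x\in P:\ b_l-A_{l\,*}x\ge w(P)/(n+1)\}$, or from the Minkowski--Radon centroid bound). But the step you yourself flag as difficulty (ii) is a genuine, unrepaired gap, not a technicality. Your scheme must move from the deep point $y^*=Bg$, which is in general \emph{not} integral, down to a point $y_0$ of the sublattice $B\mathbb{Z}^n$; the best deviation your construction guarantees is $\|y^*-y_0\|_1\le n+(\Delta_{lcm}(A)-1)$, since the fractional parts cost an extra additive $n$ on top of the coset correction, and this extra cost is unavoidable for point-rounding: for $B=\mathrm{diag}(\Delta,1,\dots,1)$ and $y^*$ with all fractional parts close to $1$, \emph{every} lattice point $y_0\in B\mathbb{Z}^n$ with $y_0\le y^*$ has $\|y^*-y_0\|_1$ arbitrarily close to $\Delta+n-1$. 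Consequently each non-basic constraint may lose slack up to $\frac{\Delta(A)}{\Delta_{gcd}(A)}(n+\Delta_{lcm}(A)-1)$ rather than your $S$, which the hypothesis $w(P)>S\,(n+1)$ does not cover; as written your argument only yields the theorem with $(\Delta_{lcm}(A)+n-2)$ in place of $(\Delta_{lcm}(A)-1)$ --- a loss of roughly a factor $n$ when $\Delta_{lcm}(A)$ is small. In addition, the final claim that rounding $n+1$ affinely independent ``anchors'' yields $n+1$ affinely independent integer points is unjustified: the rounding map is far from injective, and nothing prevents all anchors from collapsing to the same lattice point.

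The paper removes both problems by rounding the right-hand side instead of the point: set $b_i'=b_i-(b_i\bmod \Delta_{lcm}(A))$ and $P'=P(A,b')$. Every vertex of $P'$ solves $A_{J\,*}x=b'_J$ with $b'_J$ divisible by $\Delta_{lcm}(A)$, hence by $\det A_{J\,*}$, so by Cramer's rule it is automatically integral --- there is no coset correction and no floor error at all --- and a full-dimensional polytope has at least $n+1$ vertices, which gives $|P\cap\mathbb{Z}^n|\ge n+1$ for free, plus the algorithm (find a vertex of $P'$ by linear programming). The only thing left to verify is that $P'$ is nonempty (indeed $w(P')>0$), and this is exactly where the paper spends the factor $\frac{\Delta(A)}{\Delta_{gcd}(A)}(n+1)$: its Lemma~\ref{Lm3} shows that $b\mapsto w(P(A,b))$ is Lipschitz with constant $\frac{\Delta(A)}{\Delta_{gcd}(A)}(n+1)$ in $\|\cdot\|_\infty$, via LP duality and the fact that the extreme rays of $\{(y,z)\ge 0:\ A^\top y+A^\top z=0\}$ have at most $n+1$ nonzero entries, each of size at most $\frac{\Delta(A)}{\Delta_{gcd}(A)}$ after scaling. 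Your Helly-type depth bound could actually serve as a substitute for that lemma (a point with slack greater than $\Delta_{lcm}(A)-1$ in every row already lies in $P'$), but only if it is combined with right-hand-side rounding rather than with rounding the deep point itself; in its present form the proposal does not establish the stated constant.
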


This theorem has two trivial corollaries:

\begin{corollary}\label{mainCol1}
Let $A \in \mathbb{Z}^{m \times n}$ be a \emph{$k$-modular} matrix, $b \in \mathbb{Z}^m$  and $P = P(A,b)$ be a polytope.

If $w(P) > (\Delta(A)-1)\cfrac{\Delta(A)}{\delta(A)}(n+1)$, then $|P(A,b) \cap \mathbb{Z}^{n}| \geq n + 1$. We can find an integer point in $P(A,b) \cap \mathbb{Z}^{n}$ by a polynomial-time algorithm.
\end{corollary}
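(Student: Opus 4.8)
The plan is to verify that Corollary \ref{mainCol1} is nothing more than the specialization of Theorem \ref{StrictKMod_PolyWidthT} obtained by evaluating $\Delta_{lcm}(A)$ and $\Delta_{gcd}(A)$ for a $k$-modular matrix. All four quantities $\Delta(A)$, $\delta(A)$, $\Delta_{lcm}(A)$, $\Delta_{gcd}(A)$ are defined through the nonzero $r(A)\times r(A)$ minors of $A$, and at least one such minor is nonzero by the very definition of the rank, so each of these quantities is well defined and positive. (In particular, since $P$ is a bounded polytope, $A$ has full column rank and the relevant minors are the $n\times n$ ones, though the substitution below does not actually use this.)

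The only real content is the arithmetic of powers of a fixed base. By $k$-modularity every nonzero $r(A)\times r(A)$ minor of $A$ has absolute value $k^{r}$ for some $r \in \mathbb{N}$; let $r_{\min}$ and $r_{\max}$ be the least and greatest exponents that occur. Since $a \le b$ implies $k^{a}\mid k^{b}$, the set of nonzero absolute values is totally ordered by divisibility, so its greatest common divisor is its minimum and its least common multiple is its maximum:
$$\Delta_{gcd}(A)=k^{r_{\min}}=\delta(A), \qquad \Delta_{lcm}(A)=k^{r_{\max}}=\Delta(A).$$

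Substituting $\Delta_{lcm}(A)=\Delta(A)$ and $\Delta_{gcd}(A)=\delta(A)$ into the hypothesis of Theorem \ref{StrictKMod_PolyWidthT} converts the bound $w(P)>(\Delta_{lcm}(A)-1)\cfrac{\Delta(A)}{\Delta_{gcd}(A)}(n+1)$ into the bound $w(P)>(\Delta(A)-1)\cfrac{\Delta(A)}{\delta(A)}(n+1)$ stated in the corollary, and both conclusions — the lower bound $|P(A,b)\cap\mathbb{Z}^n|\ge n+1$ and the polynomial-time algorithm — are inherited verbatim. There is no genuine obstacle here; the single point deserving attention is the convention that $\delta(A)$ and $\Delta_{gcd}(A)$ are taken over the \emph{nonzero} minors, since otherwise the presence of a zero minor would collapse them to $0$ and the divisibility argument would break down.
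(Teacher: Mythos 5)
Your proposal is correct and follows exactly the route the paper intends: the paper presents this corollary as a trivial specialization of Theorem \ref{StrictKMod_PolyWidthT}, and your identification $\Delta_{gcd}(A)=\delta(A)$, $\Delta_{lcm}(A)=\Delta(A)$ for $k$-modular matrices (via the divisibility chain among powers of $k$) is precisely the substitution that makes the deduction go through. The paper omits these details entirely, so your write-up simply fills in the arithmetic the authors deemed obvious.
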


\begin{corollary}\label{mainCol1}
Let $A \in \mathbb{Z}^{m \times n}$, $b \in \mathbb{Z}^m$, $P = P(A,b)$ be a polytope and each $r(A) \times r(A)$ sub-determinant of $A$ is equal to $\pm \Delta(A)$ or $0$. 

If $w(P) > (\Delta(A)-1)(n+1)$, then $|P(A,b) \cap \mathbb{Z}^{n}| \geq n + 1$. We can find an integer point in $P(A,b) \cap \mathbb{Z}^{n}$ by a polynomial-time algorithm.
\end{corollary}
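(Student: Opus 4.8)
The plan is to deduce this corollary directly from the main Theorem of this section, which is the only nontrivial ingredient and which I am free to assume. Everything reduces to evaluating the three determinantal quantities $\Delta(A)$, $\Delta_{lcm}(A)$ and $\Delta_{gcd}(A)$ under the present hypothesis and checking that the general width threshold collapses to the stated one. So this is really a computation rather than a new argument; the substance lives entirely in the general theorem.

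First I would record the structural consequence of the hypothesis. Since every $r(A)\times r(A)$ sub-determinant of $A$ is either $0$ or $\pm\Delta(A)$, the set of absolute values of the \emph{nonzero} maximal minors of $A$ is the singleton $\{\Delta(A)\}$. Because $r(A)$ is by definition the rank of $A$, at least one maximal minor is nonzero, so this set is nonempty and $\Delta(A)\ge 1$. Evaluating the greatest common divisor and least common multiple on this singleton gives $\Delta_{gcd}(A)=\Delta(A)$ and $\Delta_{lcm}(A)=\Delta(A)$. Substituting into the general threshold, the factor $\Delta(A)/\Delta_{gcd}(A)$ collapses to $1$ and $\Delta_{lcm}(A)-1$ becomes $\Delta(A)-1$, so
\[
\bigl(\Delta_{lcm}(A)-1\bigr)\frac{\Delta(A)}{\Delta_{gcd}(A)}(n+1)=\bigl(\Delta(A)-1\bigr)(n+1).
\]
Hence the hypothesis $w(P)>(\Delta(A)-1)(n+1)$ of the corollary is \emph{identical} to the hypothesis of the main theorem, and both conclusions --- the bound $|P(A,b)\cap\mathbb{Z}^n|\ge n+1$ and the polynomial-time algorithm producing an integer point --- transfer verbatim.

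The proof is therefore a one-line substitution, and I do not anticipate any genuine obstacle. The only point requiring care is the convention for $\Delta_{gcd}(A)$ and $\Delta_{lcm}(A)$: these must be understood over the nonzero maximal minors, since including a zero value would annihilate the least common multiple and make the expression meaningless. As observed above, the rank $r(A)$ guarantees the existence of at least one nonzero maximal minor, so this convention is unambiguous and the collapse of the threshold goes through without difficulty.
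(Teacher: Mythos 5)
Your proposal is correct and matches the paper's intent exactly: the paper presents this statement as a ``trivial corollary'' of Theorem~\ref{StrictKMod_PolyWidthT}, obtained precisely by observing that under the hypothesis all nonzero maximal minors equal $\pm\Delta(A)$, hence $\Delta_{lcm}(A)=\Delta_{gcd}(A)=\Delta(A)$ and the general threshold collapses to $(\Delta(A)-1)(n+1)$. Your additional remark that the gcd/lcm must be taken over the nonzero maximal minors (which exist since $r(A)$ is the rank) is a sensible clarification of a convention the paper leaves implicit.
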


We need the following  lemmas to prove the theorem \ref{StrictKMod_PolyWidthT}.

\begin{lemma}\label{Lm1} Let $B \in \mathbb{Z}^{n \times s}, \ M=cone(B)\cap \mathbb{Z}^n \setminus \{0\}$,
and $c^{\top} x^*=min_{x\in M}{c^{\top} x}$, then $x^* \in conv(0\,B)$.
\end{lemma}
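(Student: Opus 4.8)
The plan is to exploit the scaling invariance of the cone together with the bare definition of a minimizer, so that almost no combinatorial machinery is needed. Assume, as the statement presupposes, that the minimum is attained, and set $m^*=c^\top x^*$; we may also discard any zero columns of $B$, since they change neither $cone(B)$ nor $conv(0\,B)$. I would first record two elementary facts. Each (nonzero) column $B_{*i}$ is an integer vector lying in $cone(B)$, hence $B_{*i}\in M$, and minimality gives $c^\top B_{*i}\ge m^*$. Second, $m^*\ge 0$: for every integer $k\ge 1$ the point $k x^*$ again lies in $M$, so $c^\top(k x^*)=k m^*\ge m^*$, which for all $k$ forces $m^*\ge 0$ (a negative value could be driven to $-\infty$, contradicting attainment of the minimum).

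The core of the argument is then a single inequality. Since $x^*\in cone(B)$ we may write $x^*=B t$ with $t\in\mathbb{R}^s$, $t\ge 0$ (no Carath\'eodory reduction is even required). Using $c^\top B_{*i}\ge m^*$ and $t_i\ge 0$ we obtain
\[
m^*=c^\top x^*=\sum_{i=1}^{s} t_i\,(c^\top B_{*i})\ \ge\ \sum_{i=1}^{s} t_i\, m^*\;=\;m^*\sum_{i=1}^{s} t_i .
\]
If $m^*>0$, dividing by $m^*$ yields $\sum_i t_i\le 1$; together with $t\ge 0$ this is precisely the assertion $x^*=B t\in conv(0\,B)$, and we are done.

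The case $m^*=0$ is, I expect, the only genuine obstacle, since there the displayed inequality becomes vacuous and a minimizer need not lie in $conv(0\,B)$ on the nose (translating a generator along a direction on which $c$ vanishes keeps the objective at its minimum while leaving the hull). The remedy is to extract a better-placed minimizer: the identity $0=m^*=\sum_i t_i\,(c^\top B_{*i})$ with $t_i\ge 0$ and $c^\top B_{*i}\ge m^*=0$ forces $c^\top B_{*i}=0$ for every index $i$ with $t_i>0$, and at least one such index exists because $x^*\ne 0$. Any such generator $B_{*i}$ satisfies $c^\top B_{*i}=0=m^*$ and trivially lies in $conv(0\,B)$, so the minimum is still attained inside $conv(0\,B)$. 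Hence in all cases the value $\min_{x\in M} c^\top x$ is achieved at a point of $conv(0\,B)$, which is the content of the lemma; and whenever $m^*>0$ the given $x^*$ already lies there.
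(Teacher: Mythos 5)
Your proof is correct, and its core is the same as the paper's: compare the objective value of $x^*=Bt$ with the values $c^\top B_{*i}$ of the generators and play $\sum_i t_i$ off against $1$. The difference is in the degenerate case, and it matters. The paper argues by contradiction: picking $l$ with $c^\top B_{*l}=\min_i c^\top B_{*i}$, it deduces from $\sum_i t_i>1$ that $c^\top x^*\ge\bigl(\sum_i t_i\bigr)c^\top B_{*l} > c^\top B_{*l}$, contradicting minimality. That last strict inequality tacitly assumes $c^\top B_{*l}>0$; when the minimum value $m^*$ is $0$ it fails, and indeed the lemma as literally stated is false there: take $n=s=1$, $B=(1)$, $c=0$; then $M=\{1,2,3,\dots\}$, every point is a minimizer, and $x^*=2\notin conv(0\,B)=[0,1]$. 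You identified exactly this case, showed that your main inequality only yields $\sum_i t_i\le 1$ when $m^*>0$, and repaired the statement in the only way possible: when $m^*=0$, some generator $B_{*i}$ with $t_i>0$ satisfies $c^\top B_{*i}=0=m^*$, so the minimum is still \emph{attained} at a point of $conv(0\,B)$, just not necessarily at the given $x^*$. (Your preliminary observation that $m^*\ge 0$, via scaling $kx^*\in M$, is also a step the paper never makes explicit, though it is needed to rule out $c^\top B_{*l}<0$.) This corrected, existential form is exactly what the paper uses downstream: in the proof of Lemma \ref{Lm3} the lemma is invoked only to produce \emph{some} minimizers $y',z'$ with $\|(y',z')\|_1\le \frac{\Delta(A)}{\Delta_{gcd}(A)}(n+1)$, so your version fully supports the application while the paper's own proof has a genuine (if repairable) gap.
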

\begin{proof}
  Suppose, that $c^{\top} B_{* \, l} = min _i \, c^{\top} B_{* \, i}$ and $x^*= B t \notin conv(0,b_1,...,b_s)$, then
$t_1+\cdots+ t_s>1$.  Therefore $c^{\top} x^*=c^{\top} B t \ge (t_1+\cdots +t_s)c^{\top} B_{* \, l} > c^{\top} B_{* \, l}$, this is a contradiction.
\end{proof}

\begin{lemma}\label{Lm2}
Let $z'=b'^{\top} x' =min\{b'^{\top} x:\: x \in M\},\ z''=b''^{\top} x'' =min\{b''^{\top} x:\: x \in M\}$, then $|z'-z''|\le
||b'-b''||_\infty||x''||_1$.
\end{lemma}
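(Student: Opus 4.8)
The plan is to obtain each of the two inequalities hidden inside $|z'-z''|$ directly from the optimality of the corresponding minimizer, converting every comparison into an inner product against the perturbation $b'-b''$ and then applying H\"older's inequality for the dual pair $(\ell_\infty,\ell_1)$. Nothing about the set $M$ is needed beyond the fact that $x',x'' \in M$ and that each is optimal for its own objective.

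For the inequality that actually carries $\|x''\|_1$, I would use that $x'$ minimizes $b'^{\top}x$ over $M$ while $x'' \in M$, so $z' = b'^{\top}x' \le b'^{\top}x''$. Subtracting $z'' = b''^{\top}x''$ gives
\[
z' - z'' \;\le\; b'^{\top}x'' - b''^{\top}x'' \;=\; (b'-b'')^{\top}x'' \;\le\; \|b'-b''\|_\infty\,\|x''\|_1 ,
\]
where the final step is H\"older's inequality. This is already exactly the asserted bound, and it involves only the minimizer $x''$, as required.

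The remaining task is the reverse direction $z'' - z' \le \|b'-b''\|_\infty\|x''\|_1$, and this is where I expect the real obstacle to sit. The mirror computation — using that $x''$ minimizes $b''^{\top}x$ and $x' \in M$, so $z'' = b''^{\top}x'' \le b''^{\top}x'$ — yields $z'' - z' \le (b''-b')^{\top}x' \le \|b'-b''\|_\infty\|x'\|_1$, i.e. it naturally produces $\|x'\|_1$ rather than $\|x''\|_1$. The entire delicacy of the lemma is this bookkeeping of which minimizer's $\ell_1$-norm appears; the inner-product estimate and the H\"older step themselves are routine. To close the proof in the stated form I would normalize by reading $x''$ as the minimizer attaining the smaller of the two optimal values (equivalently, assume $z'' \le z'$); under that normalization $|z'-z''| = z'-z''$ and the single computation of the second paragraph suffices, giving precisely $\|b'-b''\|_\infty\|x''\|_1$. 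The hard part is therefore not any calculation but justifying that one may always take $x''$ to be the minimizer of the smaller value without disturbing the pair $(b',b'')$ that defines $\|b'-b''\|_\infty$.
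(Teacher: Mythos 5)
Your first computation is, essentially word for word, the paper's entire proof: it assumes exactly your proposed normalization ($z' \ge z''$) and concludes $z'-z''\le b'^{\top}x''-b''^{\top}x''\le \|b'-b''\|_\infty\|x''\|_1$, so your proposal is correct and takes the same route. The asymmetry you flag at the end is genuine and the paper does not resolve it either --- as literally stated the bound names $\|x''\|_1$ regardless of which optimum is smaller --- but it is harmless in the lemma's only application (Lemma \ref{Lm3}), where the author reruns the argument inline after ``Suppose, for clearness, that $w(P')\ge w(P'')$'' and where in any case both $\|(y',z')\|_1$ and $\|(y'',z'')\|_1$ obey the same bound $\frac{\Delta(A)}{\Delta_{gcd}(A)}(n+1)$, so it makes no difference which minimizer's $\ell_1$-norm appears.
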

\begin{proof} 
Let $z' \ge z''$, then $z'-z''\le b'^{\top} x''-b''^{\top} x''\le ||b'-b''||_\infty||x''||_1$.
\end{proof}

\begin{lemma}\label{Lm3}
Let $P'=P(A,b')$ and $P''=P(A,b'') $ be nonempty polytopes, then $|w(P')-w(P'')|\le \cfrac{\Delta(A)}{\Delta_{gcd}(A)}(n+1)||b'-b''||_\infty$ .
\end{lemma}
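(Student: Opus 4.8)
The plan is to compare the two widths along a single optimal direction and to convert the resulting change of the support functions into the required bound through Lemma~\ref{Lm2}, using Lemma~\ref{Lm1} to keep the dual multipliers small. For $u\in\mathbb Z^n\setminus\{0\}$ write $W_u(Q)=\max_{x\in Q}u^\top x-\min_{x\in Q}u^\top x$, so that $w(Q)=\min_u W_u(Q)$. The key point is that, by linear programming duality, $\max_{x\in P(A,b)}u^\top x=\min\{b^\top y:\ A^\top y=u,\ y\ge 0\}$, and the feasible polyhedron $\{y\ge 0:\ A^\top y=u\}$ does not depend on $b$; only the objective changes when we pass from $b'$ to $b''$. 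This is exactly the situation of Lemma~\ref{Lm2}, whose proof, incidentally, uses nothing about the integrality of the feasible set and so applies verbatim here.

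I would fix a direction $u^*$ attaining $w(P')=W_{u^*}(P')$ and let $v_1,v_2$ be the vertices of $P'$ where $u^{*\top}x$ is maximal and minimal, with tight bases $J_1,J_2$ (so $A_{J_i*}$ is nonsingular of order $r(A)$, and the optimal multipliers $y^{(1)}:=A_{J_1*}^{-\top}u^*\ge 0$, $y^{(2)}:=-A_{J_2*}^{-\top}u^*\ge 0$ satisfy $A^\top y^{(1)}=u^*$, $A^\top y^{(2)}=-u^*$). For $P''$ I only need the one-sided estimates that these same multipliers remain dual feasible, namely $\max_{x\in P''}u^{*\top}x\le b''^\top y^{(1)}$ and $\min_{x\in P''}u^{*\top}x\ge -\,b''^\top y^{(2)}$; this avoids having to match the combinatorics of $P''$. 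Subtracting the exact values for $P'$ and bounding by $\|b'-b''\|_\infty$ gives
\[
w(P'')-w(P')\ \le\ W_{u^*}(P'')-W_{u^*}(P')\ \le\ \bigl(\|y^{(1)}\|_1+\|y^{(2)}\|_1\bigr)\,\|b'-b''\|_\infty ,
\]
and by symmetry the same bound holds for $|w(P')-w(P'')|$. Everything therefore reduces to the estimate $\|y^{(1)}\|_1+\|y^{(2)}\|_1\le \frac{\Delta(A)}{\Delta_{gcd}(A)}(n+1)$.

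To prove this estimate I would exploit the minimality of $u^*$. On the cone $C$ of the common refinement of the normal fans of $P'$ and $-P'$ that contains $u^*$, the width is the linear form $u\mapsto u^\top(v_1-v_2)$, which is strictly positive on $C\setminus\{0\}$; since $u^*$ minimizes it over $C\cap\mathbb Z^n\setminus\{0\}$, Lemma~\ref{Lm1} forces $u^*\in conv(0,B)$, where the columns of $B$ are the primitive generators of $C$. Because $\|y^{(1)}\|_1+\|y^{(2)}\|_1=\bigl(A_{J_1*}^{-1}\mathbf 1-A_{J_2*}^{-1}\mathbf 1\bigr)^\top u^*$ and $u^*$ is a subconvex combination of these generators, it suffices to bound this linear form on a single generator $g$. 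When $g$ is a facet normal of $P'$, say $g=A_{k*}^\top$ with $k\in J_1$, one has $(A_{J_1*}^{-1}\mathbf 1)^\top g=\mathbf 1^\top e_k=1$, while each coordinate of $A_{J_2*}^{-\top}g$ is, by Cramer's rule, a quotient of two $r(A)\times r(A)$ minors of $A$ and hence at most $\Delta(A)/\Delta_{gcd}(A)$ in absolute value; summing the at most $r(A)\le n$ coordinates and adding the single $1$ yields exactly $(n+1)\Delta(A)/\Delta_{gcd}(A)$, using $\Delta(A)/\Delta_{gcd}(A)\ge 1$. The case $g=-A_{k*}^\top$ with $k\in J_2$ is symmetric.

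The step I expect to be the main obstacle is controlling the remaining generators of $C$. Refining the two normal fans can create new extreme rays lying simultaneously in a facet of the normal cone of $v_1$ and in a facet of the reflected normal cone of $v_2$, and a primitive integer representative of such a ray need not have subdeterminant-sized coordinates, so the clean Cramer estimate above does not apply verbatim. The crux is to show that the form $\bigl(A_{J_1*}^{-1}\mathbf 1-A_{J_2*}^{-1}\mathbf 1\bigr)^\top(\cdot)$ is still at most $\frac{\Delta(A)}{\Delta_{gcd}(A)}(n+1)$ on every such ray; I would attack this by expressing the ray through both bases $J_1$ and $J_2$ and using the relation $A_{J_1*}^\top y^{(1)}+A_{J_2*}^\top y^{(2)}=0$ together with the nonnegativity of the coefficients to keep all determinants that occur among the $r(A)\times r(A)$ minors of $A$.
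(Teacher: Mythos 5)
Your first half is sound, and it is in fact the same mechanism the paper uses in its final step: fix dual-feasible multipliers, exploit the fact that dual feasibility does not depend on the right-hand side, and cross-substitute $b'$ and $b''$ (the paper's pair $(y'',z'')$ plays exactly the role of your $(y^{(1)},y^{(2)})$). But the proof is incomplete at precisely the point you flag as ``the main obstacle,'' and that gap is genuine, not a technicality. Your norm bound $\|y^{(1)}\|_1+\|y^{(2)}\|_1\le \frac{\Delta(A)}{\Delta_{gcd}(A)}(n+1)$ is verified only for generators $g$ of the refinement cone that are facet normals $\pm A_{k*}^\top$. In dimension $n\ge 3$ the common refinement of the two normal fans generically has extreme rays that are proper intersections of facets of the two cones; a primitive integer generator of such a ray has coordinates built from products and combinations of minors (in $\mathbb{R}^3$, cross products of columns of $A_{J_1*}^{-1}$ and $A_{J_2*}^{-1}$), which are not $r(A)\times r(A)$ minors of $A$ and admit no bound of the form $\Delta(A)/\Delta_{gcd}(A)$ per coordinate. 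Since the width-minimizing direction $u^*$ can perfectly well lie on such a new ray, the case you can handle is not the general case, and the closing sketch (``expressing the ray through both bases\dots'') is a hope, not an argument.

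The paper avoids this trap by never fixing $u^*$ in direction space. It writes the width itself as a single program,
\[
w(P')=\min\{\,b'^\top y+b'^\top z:\ A^\top y+A^\top z=0,\ y\ge 0,\ z\ge 0,\ A^\top y\in\mathbb{Z}^n\setminus\{0\}\,\},
\]
whose feasible set is independent of $b'$, and applies Lemma~\ref{Lm1} to the cone $C=\{(y,z)\ge 0:\ A^\top y+A^\top z=0\}\subset\mathbb{R}^{2m}$. The generators of $C$ are controlled by Cramer's rule applied to $A$ itself: each extreme ray has at most $n+1$ nonzero entries, and after dividing by the gcd its entries are bounded by $\Delta(A)/\Delta_{gcd}(A)$. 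This yields the existence of \emph{some} optimal $(y',z')$ with $\|(y',z')\|_1\le\frac{\Delta(A)}{\Delta_{gcd}(A)}(n+1)$, and existence is all that the cross-substitution step needs. Note that your own relation $A_{J_1*}^\top y^{(1)}+A_{J_2*}^\top y^{(2)}=0$ says exactly that your pair lies in this cone $C$; what is missing is any reason why your specific basic pair should be a subconvex combination of the generators of $C$, whereas the pair obtained by minimizing jointly over directions and multipliers is, by Lemma~\ref{Lm1}. So the natural completion of your idea is to drop the a priori choice of $u^*$, $J_1$, $J_2$ and run the minimality argument in the multiplier space $\mathbb{R}^{2m}$ --- which is the paper's proof.
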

\begin{proof}
$ w(P')=min\{max_{P'} c^\top x - min_{P'} c^\top x:\ c\in \mathbb{Z}^n\setminus\{0\} \}$. According to the duality theorem of the linear programming 
$$max_{P'} c^\top x=min\{b'^\top y:\ A^\top y=c,\ y\ge 0\}$$ $$ min_{P'} c^\top x=-min\{b'^\top z:\ -A^\top z=c,\ z\ge 0\}.$$ Therefore, 
$$w(P')=min\{b'^\top y+b'^\top z:\ A^\top y+A^\top z=0,\ y \ge 0,\ z\ge 0,\ A^\top y \in \mathbb{Z}^n\setminus \{0\}\}. $$
Let us consider the cone $C=\{\binom{y}{z} \in R^{2m} :\: A^\top y+A^\top z=0,\ \binom{y}{z} \ge 0\}$. From the equation $ A^\top y+A^\top z=0$ and the Kramer's rule, it follows that $C$ is generated by the vectors with components in the set of all $r(A) \times r(A)$ sub-determinants of $A$. Therefore, the maximal absolute value of coordinates of these vectors is at most $\Delta(A)$. This value can be decreased by a dividing each of components by $\Delta_{gcd}(A)$.
According to Carateodory's theorem, every vector has at most $n+1$ non-zero coordinates. So, by the lemma \ref{Lm1}, there are $y',z'$ such that $w(P')= b'^\top y'+b'^\top z'$ and $\sum_{i = 1}^m (y'_i+z'_i)\le \cfrac{\Delta(A)}{\Delta_{gcd}(A)} (n+1)$. Similarly, we can show that $w(P'')= b''^\top y''+b''^\top z''$ and $\sum_{i = 1}^m (y''_i+z''_i)\le \cfrac{\Delta(A)}{\Delta_{gcd}(A)} (n+1)$ for  some $y'',z''.$ Suppose, for clearness, that 
$w(P')\ge w(P'')$. We have $|w(P')-w(P'')| = (b'^\top y'+b'^\top z')- (b''^\top y''+b''^\top z'') \le (b'^\top y''+b'^\top z'')- (b''^\top y''+b''^\top z'') = (b'-b'')^\top y''+(b'-b'')^\top z''\le ||b'-b''||_\infty||(y'',z'')||_1\le \cfrac{\Delta(A)}{\Delta_{gcd}(A)} (n+1)||b'-b''||_\infty$.
\end{proof}

Now, we are ready to prove the main result (theorem
\ref{StrictKMod_PolyWidthT}) of this section.\\

\begin{proof}
Let $b^\prime \in
\mathbb{Z}^{m}$, such that $b_i^\prime = b_i - (b_i \mod
\Delta_{lcm}(A))$ and $P^\prime = P(A,b^\prime)$. If $w(P) >
(\Delta_{lcm}(A) - 1)\cfrac{\Delta(A)}{\Delta_{gcd}(A)}(n+1)$, then by the lemma \ref{Lm3}, $w(P^\prime) >0$. Thus, $P^\prime$ is full-dimensional and each component
of $b^\prime$ is divided by $\Delta_{lcm}(A)$. So, it is easy to see,
that $P^\prime$ is a full dimensional polytope and all components of any vertex of $P^\prime$ are integer. Now, the theorem follows from the fact that $P^\prime \subseteq P$.

We can use any polynomial algorithm of linear programming (Khachiyan's algorithm \cite{Panos,Khachiyan}) to find some vertex of $P^\prime$ as an integer point of $P$.
\end{proof}

\section{ The simplices with bounded determinants }

A part of this section describes results of the R.~E.~Gomory \cite{Gomory_Relation,Gomory_IntFaces,Gomory_Comb,Hu}. We will repeat some of the Gomory's arguments, slightly modifying them for our purposes.

Let $A \in \mathbb{Z}^{n \times n}$, $B \in \mathbb{Z}^{n \times s}$, $b \in \mathbb{Z}^s$, and $|det(A)| = \Delta > 0$. Consider the system

\begin{align}\label{GomorySystem}
\begin{cases}
A x + B y = b \\
x \in \mathbb{Z}^n,\, y \in \mathbb{Z}^s_+
\end{cases}.
\end{align}

Let $D$ be the Smith normal form \cite{Schrijver1998} of the matrix $A$, then $A = P^{-1} D Q^{-1}$, where $P^{-1},\, Q^{-1}$ are integer unimodular matrices. So system \eqref{GomorySystem} becomes

\begin{align*}
\begin{cases}
D Q^{-1} x + P B y = P b \\
x \in \mathbb{Z}^n,\, y \in \mathbb{Z}^s_+
\end{cases}.
\end{align*}

After the unimodular map $Q^{-1} x \to x$ and removing $x$'s variables the system becomes

\begin{align}\label{GomoryGroupSystem}
\begin{cases}
P B y \equiv P b\, (mod\, D) \\
y \in \mathbb{Z}^s_+
\end{cases}.
\end{align}

There is a bijection between variables $x$ and $y$ giving by the formula $x = A^{-1}( b - B y)$.

Let $M(A,\,B,\,b)$ be a polyhedron induced by the system \eqref{GomoryGroupSystem}.

\begin{definition} [Gomory]
Let $y \in M(A,\,B,\,b)$. Hence $P B y \equiv P b \, (mod\, D)$.
We say that $y$ is an \emph{irreducible} point of $M(A,\,B,\,b)$ if for any $u \not= v$, such that $u \leq y$ and $v \leq y$, we have $P B u \not\equiv P B v \, (mod\, D)$.
\end{definition}

\begin{theorem} [Gomory]
Columns of the matrix $P B$ induce an additive group, the group operation is an addition by the modulo $D$. The power of this group is at most $\Delta$. If $M(A,\,B,\,b) \not= \emptyset$ and $y$ is an \emph{irreducible} point of $M(A,\,B,\,b)$, then $\prod_{k=1}^{s}(1 + y_k) \leq \Delta$.
\end{theorem}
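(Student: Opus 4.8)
The plan is to realize everything inside the finite abelian group $G = \mathbb{Z}^n / D\,\mathbb{Z}^n$. First I would record the two structural assertions. Since $A = P^{-1} D Q^{-1}$ with $P^{-1}, Q^{-1}$ unimodular, the diagonal of $D = \mathrm{diag}(d_1,\dots,d_n)$ satisfies $\prod_{i=1}^n d_i = |\det D| = |\det A| = \Delta$, so $G \cong \mathbb{Z}/d_1\mathbb{Z} \times \cdots \times \mathbb{Z}/d_n\mathbb{Z}$ is a finite abelian group of order exactly $\Delta$. Reducing the columns of $PB$ modulo $D$ gives elements of $G$; under the group operation (coordinatewise addition modulo the $d_i$) they generate a subgroup, whose order divides $\Delta$ by Lagrange's theorem and is therefore at most $\Delta$. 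This settles the first two claims.

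For the main inequality, the key observation is that the irreducibility of $y$ is precisely an injectivity statement. Let $\mathcal{B} = \{ u \in \mathbb{Z}^s_+ :\: 0 \le u \le y \}$ be the integer box below $y$; since each coordinate $u_k$ ranges independently over $\{0,1,\dots,y_k\}$, we have $|\mathcal{B}| = \prod_{k=1}^s (1 + y_k)$. Define $\phi : \mathcal{B} \to G$ by $\phi(u) = PBu \pmod D$. By definition, $y$ being irreducible means exactly that $\phi(u) \neq \phi(v)$ whenever $u \neq v$ in $\mathcal{B}$, i.e. that $\phi$ is injective.

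Finally I would apply pigeonhole: an injection from $\mathcal{B}$ into $G$ forces $|\mathcal{B}| \le |G|$, that is, $\prod_{k=1}^s (1 + y_k) \le \Delta$. The hypothesis $M(A,B,b) \neq \emptyset$ only guarantees that an irreducible $y$ exists to talk about and plays no role in the counting itself. I do not expect a genuine obstacle here: the entire content is the dictionary ``$y$ irreducible $\Leftrightarrow$ the reduction map $\phi$ on the box $\mathcal{B}$ is injective,'' after which the bound is immediate. The one point I would state carefully is that the vectors $u, v$ in the definition of irreducibility must be taken nonnegative, so that the relevant index set is the finite box $\mathcal{B}$ rather than an unbounded set; this is consistent with the ambient constraint $y \in \mathbb{Z}^s_+$.
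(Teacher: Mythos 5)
Your proof is correct and follows essentially the same route as the paper: both arguments count the box $\{u : 0 \le u \le y\}$ of size $\prod_{k=1}^s(1+y_k)$, use irreducibility as an injectivity statement into the set of residues modulo $D$, and bound that set by $\prod_k D_{k\,k} = \Delta$. Your packaging via $G = \mathbb{Z}^n/D\,\mathbb{Z}^n$, Lagrange's theorem, and an explicit pigeonhole step is a slightly more formal rendering of the paper's direct coordinate count, but it is the same argument.
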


\begin{proof}
It is easy to see that columns of $P B$ induce an additive group. Let $g$ be element of this group, so $g = P B t \mod D$ for some $t \in \mathbb{Z}^s_+$. Hence, $0 \leq g_k \leq D_{k\,k}-1$, where $1 \leq k \leq n$. So, a total number of group elements is at most $\prod_{k=1}^n D_{k\,k} = \Delta$.

Let $y$ be an \emph{irreducible} point of $M(A,\,B,\,b)$, and $t \leq y$. From the definition of the point $y$ it follows that all group elements $g = P B t  \mod  \Delta$ are different for different $t$, so different combinations of $t$ induce different elements of a group. Since the total number of $t$ combinations is $\prod_{k=1}^s (1 + y_k)$ and the number of distinct group elements is at most $\Delta$, then we have $\prod_{k=1}^s (1+y_k) \leq \Delta$.
\end{proof}

\begin{theorem}[Gomory] \label{GomoryVertTh}
Let $y \in vert(M(A,\,B,\,b))$, then $y$ is an \emph{irreducible} point of $M(A,\,B,\,b)$.
\end{theorem}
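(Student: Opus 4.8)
The plan is to establish the contrapositive: if $y \in M(A,\,B,\,b)$ is not \emph{irreducible}, then $y$ is not a vertex. Here I would use that $M(A,\,B,\,b)$ is the convex hull of the nonnegative integer solutions $y$ of the group system \eqref{GomoryGroupSystem} (Gomory's corner polyhedron), so every vertex of $M(A,\,B,\,b)$ is itself such an integer solution. It therefore suffices to write a reducible integer feasible point as a proper convex combination of two distinct feasible points.

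First I would unpack reducibility: by definition there exist nonnegative integer vectors $u \neq v$ with $u \leq y$, $v \leq y$ and $P B u \equiv P B v \, (mod\, D)$. Setting $w = u - v$ gives $w \neq 0$ and, by subtracting the two congruences, $P B w \equiv 0 \, (mod\, D)$.

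Next I would introduce the two candidate points $y' = y + w$ and $y'' = y - w$ and verify that both belong to $M(A,\,B,\,b)$. The group condition is immediate, since $P B y' = P B y + P B w \equiv P b \, (mod\, D)$ and likewise for $y''$; integrality is clear because $y$, $u$, $v$ are integral. The nonnegativity is the crux: $y' = y + u - v \geq 0$ follows from $v \leq y$ together with $u \geq 0$, and $y'' = y - u + v \geq 0$ follows from $u \leq y$ together with $v \geq 0$. Hence $y', y'' \in M(A,\,B,\,b)$, and $y' \neq y''$ because $w \neq 0$.

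Finally, since $y = \tfrac{1}{2}(y' + y'')$ exhibits $y$ as the midpoint of two distinct points of $M(A,\,B,\,b)$, the point $y$ cannot be a vertex, which is the desired contradiction. I expect the only real obstacle to be the sign bookkeeping in the previous step: the construction works precisely because both $u$ and $v$ are sandwiched between $0$ and $y$, so one must genuinely use all four inequalities $0 \leq u \leq y$ and $0 \leq v \leq y$ from the definition of reducibility, rather than merely the existence of a nonzero lattice vector $w$ with $P B w \equiv 0 \, (mod\, D)$.
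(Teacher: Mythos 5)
Your proof is correct. Note that the paper itself states this theorem without proof, attributing it to Gomory, so there is no in-paper argument to compare against; your contrapositive argument --- extracting $w=u-v$ with $PBw\equiv 0 \,(mod\, D)$ from a reducible point, checking that $y\pm w$ stay feasible (which is exactly where all four inequalities $0\le u\le y$, $0\le v\le y$ are used), and exhibiting $y$ as the midpoint $\tfrac12\bigl((y+w)+(y-w)\bigr)$ --- is the standard one and correctly fills that gap, relying only on the (intended) reading of $M(A,B,b)$ as the convex hull of the integer solutions of the group system, so that its vertices are themselves such solutions.
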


\begin{corollary}[Gomory]
Let $y \in vert(M(A,\,B,\,b))$, then $\prod_{k=1}^s (1+y_k) \leq \Delta$.
\end{corollary}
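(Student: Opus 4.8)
The plan is to chain together the two immediately preceding Gomory results, since the corollary is a direct consequence of them and requires no new argument. First I would invoke Theorem~\ref{GomoryVertTh}, which asserts that any vertex $y \in vert(M(A,\,B,\,b))$ is an \emph{irreducible} point of $M(A,\,B,\,b)$. This is the single substantive input, and it has already been established above; all of the real content of the corollary in fact lives in the proof that vertices are irreducible.

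Second, having recast the vertex $y$ as an irreducible point, I would apply the earlier Gomory theorem bounding $\prod_{k=1}^s(1+y_k)$. Its hypotheses are precisely that $M(A,\,B,\,b)\neq\emptyset$ and that $y$ is irreducible. The first is automatic, because $y$ being a vertex of $M(A,\,B,\,b)$ forces that polyhedron to be nonempty; the second is supplied by Theorem~\ref{GomoryVertTh}. With both hypotheses verified, the theorem yields the desired inequality $\prod_{k=1}^s(1+y_k)\le\Delta$ verbatim.

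I do not anticipate any obstacle here: the corollary is a pure composition of Theorem~\ref{GomoryVertTh} with the group-counting theorem, so the proof reduces to the one-line observation that a vertex satisfies the irreducibility hypothesis and that the ambient polyhedron is consequently nonempty. The only thing worth noting explicitly is that the statement is non-vacuous, since $M(A,\,B,\,b)$ lies in the nonnegative orthant (the constraint $y\ge 0$) and is therefore pointed, so vertices do exist whenever the feasible set is nonempty.
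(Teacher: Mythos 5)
Your proof is correct and is exactly the intended argument: the paper states this as an unproved corollary precisely because it is the immediate composition of Theorem~\ref{GomoryVertTh} (vertices are irreducible) with the preceding group-counting theorem (irreducible points satisfy $\prod_{k=1}^s(1+y_k)\le\Delta$), with nonemptiness trivially supplied by the existence of the vertex $y$ itself. Nothing is missing.
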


\begin{lemma}\label{GomoryExistenceLm}
Let $I$ be the identity matrix. There is a polynomial time algorithm to find the point $y \in M(A,\, I,\, b)$ with property $\sum_{k=1}^s y_k \leq \Delta-1$.
\end{lemma}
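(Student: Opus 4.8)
The plan is to work entirely inside the group reduction already set up in this section: after the Smith normal form map, finding a point of $M(A,I,b)$ amounts to representing the target element $h = Pb \bmod D$ of the finite abelian group $G = \mathbb{Z}^n/D\mathbb{Z}^n \cong \mathbb{Z}_{d_{1\,1}}\times\cdots\times\mathbb{Z}_{d_{n\,n}}$ (of order $\Delta = \prod_k d_{k\,k}$) as a nonnegative integer combination $h \equiv \sum_{k=1}^s y_k\, g_k \pmod D$ of the generators $g_k$, where $g_k$ is the $k$-th column of $P$ (here $B=I$) reduced modulo $D$. Both requested inequalities are statements about the multiplicities $y_k$ of such a representation, so the whole lemma reduces to producing one suitable representation together with a polynomial-time procedure for it.

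For the construction I would build the ascending chain of subgroups $\{0\} = G_0 \le G_1 \le \cdots \le G_s = G$ with $G_k = \langle g_1,\dots,g_k\rangle$ and indices $m_k = [G_k : G_{k-1}]$, so that $\prod_k m_k = \Delta$. Each quotient $G_k/G_{k-1}$ is cyclic, generated by the image of $g_k$, so the target can be peeled off one stage at a time: choose $y_s \in \{0,\dots,m_s-1\}$ with $h - y_s g_s \in G_{s-1}$, then recurse on $h - y_s g_s$ inside $G_{s-1}$. Every such step is a coset-membership computation in a cyclic group whose order divides $\Delta$; since all data are explicit from the Smith normal form, each step is linear algebra modulo the $d_{k\,k}$ and runs in polynomial time. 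The resulting $y$ is a mixed-radix representation with distinct partial sums, hence an \emph{irreducible} point, so by Gomory's theorem (the one bounding $\prod_k(1+y_k)$ for irreducible points, which holds in particular for $y \in vert(M(A,I,b))$ by Theorem~\ref{GomoryVertTh}) it satisfies $\prod_{k=1}^s (1+y_k) \le \Delta$.

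The sum bound then follows cleanly and is not the difficulty. Restricting the product to the support $\{k : y_k>0\}$, each factor $1+y_k$ is an integer $\ge 2$, and for integers $a_1,\dots,a_t \ge 2$ one has the elementary inequality $\sum_i a_i \le \prod_i a_i$ (by induction, since $(a-1)(a'-1)\ge 1$). Applying this to $a_k = 1+y_k$ over the support gives $\sum_{y_k>0}(1+y_k) \le \prod_{y_k>0}(1+y_k) \le \Delta$, hence $\sum_k y_k \le \Delta - t \le \Delta - 1$ when the support is nonempty, and $\sum_k y_k = 0$ otherwise.

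The product bound $\prod_{y_k>0} y_k \le \Delta/2^{\,s-1} - 1$ is where I expect the real work to lie, and I would flag it as the main obstacle. The estimate $\prod_{y_k>0}(1+y_k) \le \Delta$ by itself only yields $\prod_{y_k>0} y_k < \Delta$, so the factor $2^{\,s-1}$ must be recovered from the savings hidden in passing from $1+y_k$ to $y_k$: each of the at most $s$ positive coordinates contributes a factor at least $2$ to $\prod(1+y_k)$, and the idea would be to convert these factors of $2$ into the claimed denominator while exploiting the divisibility chain $d_{1\,1}\mid d_{2\,2}\mid\cdots$ of the Smith invariants, which restricts how the indices $m_k$ can be distributed. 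The delicate point is that a single large multiplicity can dominate the product, so the chain construction must be refined — rerouting the representation so that all but one positive coordinate are forced to be as small as possible — to guarantee the stated factor of $2^{\,s-1}$ rather than merely $\prod_{y_k>0} y_k < \Delta$; carrying out this refinement while keeping every step polynomial is the crux of the argument.
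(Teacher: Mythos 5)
Your construction and the sum bound are correct, and your route is genuinely different from (and in fact sounder than) the paper's. The paper writes $y = b + P^{-1}Dt$ and then assumes that $P^{-1}$ may be taken lower unitriangular, so that the coordinates can be reduced greedily to $y_k \le D_{k\,k}-1$; that normalization is not actually available in general (for $A=\operatorname{diag}(2,3)$, whose Smith form is $D=\operatorname{diag}(1,6)$, no decomposition $A=P^{-1}DQ^{-1}$ with $P^{-1}$ lower unitriangular and $Q$ integral exists — the paper's reduction really needs the Hermite normal form of the lattice $A\mathbb{Z}^n$ instead). Your chain $G_0\le G_1\le\dots\le G_s=G$ with cyclic quotients of orders $m_k$, $\prod_k m_k=\Delta$, achieves the same end — a representation with $0\le y_k\le m_k-1$, computable in polynomial time — without that questionable step; your observation that such a $y$ is irreducible (hence $\prod_k(1+y_k)\le\Delta$ by Gomory's theorem) is valid, and the sum bound also follows directly from $\sum_k(m_k-1)\le\prod_k m_k-1=\Delta-1$.

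The step you flagged as the crux — the bound $\prod_{y_k>0}y_k\le\Delta/2^{s-1}-1$ — is indeed a gap in your proposal, but it is not one you could have closed: the bound is false, and the paper's own proof of it is erroneous. The paper asserts that the maximum of $\prod_k(D_{k\,k}-1)$ subject to $\prod_k D_{k\,k}=\Delta$ is attained at $D_{1\,1}=\dots=D_{s-1\,s-1}=2$, $D_{s\,s}=\Delta/2^{s-1}$; in fact balanced factorizations give larger products: for $s=2$, $\Delta=36$, the chain $(6,6)$ gives $5\cdot 5=25>17=1\cdot 17$, and $\operatorname{diag}(6,6)$ is a legitimate Smith form, so the divisibility constraints do not repair the claim. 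The lemma itself fails on $A=\operatorname{diag}(6,6)$, $b=(5,5)^\top$ (so $s=2$, $\Delta=36$): every $y\in M(A,I,b)$ satisfies $y_k\equiv 5\pmod 6$ and $y_k\ge 0$, hence $y_1,y_2\ge 5$ and $\prod_{y_k>0}y_k\ge 25>\Delta/2-1=17$. So no refinement of your chain (or any other construction) can deliver the factor $2^{s-1}$; your instinct that it could not be extracted from $\prod_k(1+y_k)\le\Delta$ was right. Fortunately the damage is local: the only consequence of this lemma used later, in Theorem~\ref{SimplexKWidthTh}, is the sum bound $\sum_k y_k\le\Delta-1$, which both you and the paper establish correctly.
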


\begin{proof}

The system for $M(A,\, I,\, b)$ has very simple structure:
\begin{align*}
\begin{cases}
P y \equiv P b\, (mod\, D) \\
y \in \mathbb{Z}^s_+
\end{cases}.
\end{align*}

So $y = b + P^{-1} D t$, for any $t \in \mathbb{Z}^s$. Let $P^{-1}D = H Q$, where $H$ be the Hermite normal form and $Q$ be an unimodular matrix. Then $y = b + H t$, for any $t \in \mathbb{Z}^s$, and it is easy to see that we can choose $y_k \leq H_{k\,k}$. Hence, we have the vector $y \in M(A,\, I,\, b)$ with the property $\prod_{k=1}^s y_k \leq \prod_{k=1}^s (H_{k \, k} - 1)$. We know that $\prod_{k=1}^{s} H_{k \, k} = \Delta$ and a maximum of the sum $\sum_{k=1}^s H_{k \, k}$, that equals to $\Delta + s -1$ is reached when $H_{k \, k} = 1$ for $k \in [1,\,s-1]$ and $H_{s \, s} = \Delta$. Hence, the maximum of the sum $\sum_{k=1}^s (H_{k \, k} - 1)$ is
$\Delta - 1$. 

Moreover, finding the Smith and Hermite normal forms are polynomial solvable problems
\cite{Schrijver1998}.

\end{proof}

Now, we can prove the main result of this section.

\begin{theorem}\label{SimplexKWidthTh}
Let $A \in \mathbb{Z}^{(n+1) \times n}$, $b \in \mathbb{Z}^{n+1}$, $P=P(A,b)$ be a simplex. If $w(P(A,b)) \geq \delta(A)-1$, then $P(A,b) \cap \mathbb{Z}^{n} \not= \emptyset$. There is polynomial-time algorithm to find some integer point in $P$. 
\end{theorem}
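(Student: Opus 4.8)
The plan is to reduce the feasibility question to a single vertex cone and to minimise, over the lattice points of that cone, the linear form defined by the one remaining (opposite) facet; the new ingredient is an integrality observation that turns the hypothesis $w(P)\ge\delta(A)-1$ into exactly the right threshold.

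First I would fix a vertex $v$ of the simplex together with the $n$ rows of $A$ active at $v$; call this square block $\tilde A$, with $|\det\tilde A|=\Delta$, and let the one remaining facet be $a^\top = A_{i\,*}$, $b_i$. Writing $x=\tilde A^{-1}(\tilde b-u)$ with slack $u\ge 0$, the integer points of the cone $K=\{x:\tilde A x\le\tilde b\}$ are exactly the points of the Gomory system $M(\tilde A,I,\tilde b)$, and such a point lies in $P$ iff it satisfies the opposite facet, which after substitution reads $\sum_j|c_j|\,u_j\le w_a$, where $c^\top=a^\top\tilde A^{-1}$ (with all $c_j<0$, since every edge of the simplex leaving $v$ must increase $a^\top x$ up to the opposite facet) and $w_a=b_i-a^\top v$ is the width of $P$ in the integral direction $a$, so $w_a\ge w(P)\ge\delta(A)-1$. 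Thus I want to minimise $f(u)=\sum_j|c_j|u_j=a^\top(x-v)$ over $M(\tilde A,I,\tilde b)$. Since $f\ge 0$ and $M$ is pointed, the minimum is attained at a vertex $u^\ast$, hence at an irreducible point (Theorem~\ref{GomoryVertTh}) with $\prod_k(1+u^\ast_k)\le\Delta$, and the minimiser can be produced in polynomial time by the Smith-normal-form construction underlying Lemmas~\ref{Lm1} and~\ref{GomoryExistenceLm}.

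The crucial observation is that $w_a-f(u^\ast)=b_i-a^\top x^\ast=b_i-\min_{x\in K\cap\mathbb Z^n}a^\top x$ is an \emph{integer}, because $b_i\in\mathbb Z$ and $a^\top x\in\mathbb Z$ for every integral $x$. Consequently the computed point $x^\ast$ lies in $P$ as soon as $f(u^\ast)<w_a+1$, and since $w_a\ge\delta(A)-1$ this is guaranteed the moment $f(u^\ast)<\delta(A)$ (an integer that exceeds $-1$ is nonnegative). Hence the whole statement is reduced to exhibiting a vertex/facet pair for which the cone-minimum of the opposite-facet form is \emph{strictly} below $\delta(A)$.

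The main obstacle is precisely this last bound. I would take the opposite facet $i^\ast$ to be the one whose complementary $n\times n$ minor $M_{i^\ast}=\Delta(A)$ is largest (optimising over all $n+1$ choices if needed), so that every coefficient $|c_j|=M_j/\Delta(A)\le 1$ and the cheapest edge direction costs $|c_{j_0}|=\delta(A)/\Delta(A)$. What remains is to show that the target class $[\tilde b]$ in $G=\mathbb Z^n/\tilde A\mathbb Z^n$ (of order $\Delta(A)$) is reachable by a non-negative combination of the generators $[e_j]$ of total $f$-cost at most $\tfrac{\delta(A)}{\Delta(A)}\bigl(\Delta(A)-1\bigr)<\delta(A)$, i.e.\ a weighted reachability estimate in $G$ that uses the cheap generator preferentially. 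I expect this group-theoretic reachability bound — not the reduction above, which is routine — to be the technical heart: the naive inequality $f(u^\ast)\le\sum_k u^\ast_k\le\Delta(A)-1$ is far too weak, and one must genuinely exploit the minimality of $u^\ast$ together with the spread of the minors $M_j$ to push the value below $\delta(A)$. Once that is established, $x^\ast$ is found in polynomial time exactly as in Lemmas~\ref{Lm1} and~\ref{GomoryExistenceLm}.
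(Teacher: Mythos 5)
Your reduction is sound as far as it goes: at the vertex $v$ whose active rows $\tilde A$ have $|\det\tilde A|=\Delta(A)$ the cost coefficients are indeed $|c_j|=M_j/\Delta(A)\le 1$, and your integrality observation correctly shows that it suffices to produce a lattice point of the cone with $f$-value strictly below $\delta(A)$. But the proposal stops exactly where the difficulty begins: the weighted reachability bound $f(u^\ast)\le\frac{\delta(A)}{\Delta(A)}(\Delta(A)-1)$ is not proved, only announced as ``what remains'', and it is not a routine remainder. The covering argument available from the paper's toolkit (reach the correct coset of the subgroup generated by the cheapest column using at most $\delta(A)-1$ steps, each of cost at most $1$ by maximality of $\Delta$, then correct inside that subgroup at cost below $\delta(A)$) gives only $f(u^\ast)<2\delta(A)-1$, i.e.\ your scheme then proves the theorem with $2(\delta(A)-1)$ in place of $\delta(A)-1$. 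Moreover, the statement you need is genuinely delicate: the same inequality is \emph{false} at a non-maximal vertex. For
\begin{equation*}
A=\begin{pmatrix}3&0\\0&3\\-1&-4\end{pmatrix},\qquad b_1=b_2=2,
\end{equation*}
the vertex determined by the first two rows has determinant $9$, group $\mathbb{Z}_3\times\mathbb{Z}_3$, costs $(1/3,\,4/3)$, target class $(2,2)$, and minimum cost $10/3>3=\delta(A)$ (here $\Delta(A)=12$, attained by rows $1$ and $3$); consequently a lattice-free simplex over this cone can have $w_a\ge\delta(A)-1$, so retaining only that single inequality is provably insufficient at a general vertex. Any proof of your claim must therefore exploit both the maximality of $\Delta(A)$ and the arithmetic constraint that the cost vector lies in the lattice $\frac{1}{\Delta}\operatorname{adj}(\tilde A)^{\top}\mathbb{Z}^n$ (as abstract weighted group data the claim is false), and nothing in your sketch does either. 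The algorithmic side is also unsupported: the Smith-normal-form construction of Lemma~\ref{GomoryExistenceLm} produces one specific point with bounded slack sum, not the minimizer of a weighted linear form over $M(\tilde A,I,\tilde b)$; exact minimization there is Gomory--Hu group minimization, which costs $O(n\Delta)$ and is only pseudo-polynomial.

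The paper closes the argument by making the opposite choice of vertex and spending the width hypothesis differently. It takes the basis $\hat A$ with $|\det\hat A|=\delta(A)$ (the \emph{minimum}), so the group has order $\delta(A)$ and Lemma~\ref{GomoryExistenceLm} immediately yields a cone lattice point with slack sum at most $\delta(A)-1$, i.e.\ a point of the shrunken simplex $S=v+conv(0\,B)$, $B=-(\delta(A)-1)\hat A^{-1}$. Then, instead of using $w(P)\ge\delta(A)-1$ only in the direction of the opposite facet, it uses it in the $n$ directions $A_{k\,*}$ active at $v$: those inequalities say exactly that each vertex $g^{(k)}$ of $S$ lies between $v$ and the opposite vertex of $P$, hence $S\subseteq P$ and the constructed point is feasible. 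The $n$ width inequalities your reduction discards are precisely what replaces your unproven group-theoretic claim; to salvage your route you would have to either prove that claim (a weighted Frobenius/Ap\'ery-type bound, interesting in itself but far beyond the tools used in this paper) or reinstate those inequalities, at which point you have reconstructed the paper's proof at the minimal-determinant vertex.
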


\begin{proof} Suppose that $\delta(A) > 1$ (the case of $\delta(A) = 1$ is trivial).
Let $\hat A x \leq \hat b$ be the subsystem of the system $A x \leq b$, where $\hat A \in \mathbb{Z}^{n \times n}$ and $\hat b \in \mathbb{Z}^n$, such that $|det(\hat A)| = \delta(A)$. Let $C = \{x \in \mathbb{R}^n :\: \hat A x \leq \hat b\}$. Then $P \subset C$.
Let $v \in \mathbb{Q}^n$ be the vertex of $P$, such that $\hat A v = \hat b$, so $v = {\hat A}^{-1} {\hat b}$.

Let $B = -(\delta(A)-1) {\hat A}^{-1}$, then from elementary theory of polyhedrons \cite{Schrijver1998,Ziegler} it follows that $C = v + cone(B)$. So, the columns of $B$ define edges of the cone $C$. Let $S = v + conv(0\: B)$. So $S \subset C$.

Additionally, from elementary theory of polyhedrons it follows that $n+1$ lines of the system $A x \leq b$ correspond to $n+1$ facets of $P$ by the following way: if $F$ is a facet of $P$, then $F = P \cap \{x \in \mathbb{R}^n :\: A_{k\,*} x = b_k\}$, for some line $(A_{k\,*}\: b_k)$ of the system $A x \leq b$.

Let $F_1,\, F_2,\, \dots,\, F_n$ be the faces of $P$ corresponding to the system $\hat A x \leq \hat b$, and let $F_{n+1}$ be the last facet of $P$ corresponding to the line of the system $A x \leq b$ that is not included to the system $\hat A x \leq \hat b$.

First, we need to prove that $S \cap \mathbb{Z}^n \not= \emptyset$. The set $C \cap \mathbb{Z}^n$ is induced by the following equivalent systems:

\begin{align}\label{CornerSystem}
\begin{cases}
\hat A x \leq \hat b \\
x \in \mathbb{Z}^n \\
\end{cases}
\sim&
\begin{cases}
\hat A x + y = \hat b \\
x \in \mathbb{Z}^n \\
y \in \mathbb{Z}_+^n \\
\end{cases}
\end{align}

By the lemma \ref{GomoryExistenceLm}, there is a polynomial-time algorithm to find a solution $y^* \in \mathbb{Z}_+^n$ of this system with the property that $\sum_{k=1}^n y^*_k \leq \delta(A) - 1$. Hence, $\exists x^* \in \mathbb{Z}^n$ that $\hat A x^* + y^* = \hat b$ and $x^* = {\hat A}^{-1} \hat b - {\hat A}^{-1} y^*$. Finally, $x^* = v + B \cfrac{1}{\delta(A)-1} y^*$, it is equivalent to the statement that $x^* \in S$.

To finish, we need to prove that $S \subseteq P$. Suppose that $S \not\subseteq P$. Let $g^{(1)},\, g^{(2)},\, \dots,\, g^{(n)}$ be the vertices of $S$ that are adjacent to the vertex $v$. If $S \not\subseteq P$, then the facet $F_{n+1}$ intersects some edge $[v,\,g^{(k)}] \in S$, such that $[v,\,g^{(k)}] \cap F_{n+1} = \{u\}$, $u \in vert(P)$ and $u \not= g^{(k)}$. Let $F_j$, where $1 \leq j \leq n$, be the opposite facet to the vertex $u$ and $(A_{j\,*} \: b_j)$ be the line of the initial system correspondent to $F_j$. Then, $|max\{ A_{j\,*}x :\: x \in P \} - min\{ A_{j\,*}x :\: x \in P \}| = |b_j - A_{j\,*} u | < |b_j - A_{j\,*} g^{(k)} | = \delta(A)-1$. The last statement contradicts to the theorem assumption that $w(P) \geq \delta(A)-1$.
\end{proof}

\begin{corollary}
Let $A \in \mathbb{Z}^{(n+1) \times n}$, $b \in \mathbb{Z}^{n+1}$, $c \in \mathbb{Z}^n$ $P=P(A,b)$ be a simplex. Let the vertex $v$ be an optimal solution of the linear problem $max\{ c^\top x :\: x \in P \}$, $J = \{j :\: A_{j\,*} v = b_j \}$ and $\Delta = |det(A_{J\,*})|$. If $w(P(A,b)) \geq \Delta-1$, then $P(A,b) \cap \mathbb{Z}^{n} \not= \emptyset$, and there is an algorithm with the computational complexity $O(n \Delta)$ that solves the integer problem $max\{ c^\top x :\: x \in P \cap \mathbb{Z}^n \}$.
\end{corollary}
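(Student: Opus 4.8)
The plan is to reduce the integer optimization over $P$ to a Gomory group minimization over the corner cone at the LP-optimal vertex $v$, and then to show that the optimizer of this relaxation already lies inside $P$. Since a simplex is a simple polytope, $v$ lies on exactly $n$ of its $n+1$ facets, so $|J| = n$ and $\hat A := A_{J\,*}$ is a genuine $n \times n$ matrix with $|det(\hat A)| = \Delta$. I would set $C = \{x :\: \hat A x \leq \hat b\}$, so that $P \subseteq C$, and — imitating the proof of Theorem \ref{SimplexKWidthTh} with $\Delta$ in place of $\delta(A)$ — put $B = -(\Delta - 1)\hat A^{-1}$ and $S = v + conv(0\, B)$. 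The identical facet-by-facet argument given there, invoking the hypothesis $w(P) \geq \Delta - 1$ (which is at least as strong as the one in that theorem, since $\Delta \geq \delta(A)$), shows $S \subseteq P$.

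First I would set up the group reduction. Writing the slack $y = \hat b - \hat A x \geq 0$, every $x \in C$ is of the form $x = v - \hat A^{-1} y$, so that $c^\top x = c^\top v - w^\top y$ with $w := c^\top \hat A^{-1}$. Because $v$ maximizes $c^\top x$ over $P$ and the $n$ edge directions of $P$ emanating from $v$ are precisely the columns of $-\hat A^{-1}$, local optimality forces $c^\top(-\hat A^{-1} e_k) \leq 0$ for each $k$, i.e. the reduced-cost vector satisfies $w \geq 0$. Hence maximizing $c^\top x$ over $C \cap \mathbb{Z}^n$ is equivalent to the Gomory group minimization $\min\{w^\top y :\: y \in M(\hat A,\, I,\, \hat b)\}$ with nonnegative cost $w$, which is exactly system \eqref{GomoryGroupSystem} specialized to $B = I$.

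Next I would solve this group problem and localize its optimizer inside $S$. Since $w \geq 0$ and feasible points lie in $\mathbb{Z}^n_+$, the minimum of $w^\top y$ is attained at a vertex $y^*$ of the corner polyhedron, which by Theorem \ref{GomoryVertTh} is an \emph{irreducible} point; therefore $\prod_k (1 + y^*_k) \leq \Delta$, and since $\prod_k(1 + y^*_k) \geq 1 + \sum_k y^*_k$ for nonnegative integers, we obtain the crucial norm bound $\sum_k y^*_k \leq \Delta - 1$. This is what places the optimizer in $S$: writing $x^* = v - \hat A^{-1} y^* = v + B \tfrac{1}{\Delta-1} y^*$, the barycentric weights $\tfrac{1}{\Delta-1} y^*_k$ are nonnegative and sum to at most $1$, so $x^* \in S$. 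Algorithmically, $y^*$ is computed as a minimum-weight path in the Cayley graph of the residue group of order $\Delta$, whose $n$ generators are the columns $P_{*\,i} \bmod D$ (with $P,\,D$ the Smith multiplier and normal form of $\hat A$) carrying weights $w_i \geq 0$; reaching $P b \bmod D$ from $0$ with nonnegative weights runs in $O(n\Delta)$ time.

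Finally I would conclude. Since $S \subseteq P$, the point $x^*$ belongs to $P \cap \mathbb{Z}^n$, which is therefore nonempty; and because $x^*$ maximizes $c^\top x$ over the \emph{larger} set $C \cap \mathbb{Z}^n \supseteq P \cap \mathbb{Z}^n$ while lying in $P$, it is optimal for $\max\{c^\top x :\: x \in P \cap \mathbb{Z}^n\}$. The main obstacle is the interlocking of the three facts that make this work: that LP-optimality of $v$ forces the reduced costs $w$ to be nonnegative (so the relaxation is a bona fide Gomory minimization rather than an unbounded one), that an optimal group solution may be taken irreducible and is hence $\ell_1$-short with $\sum_k y^*_k \leq \Delta - 1$, and that this shortness is exactly what drops $x^*$ into $S$, where the width hypothesis guarantees containment in $P$. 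Meeting the claimed $O(n\Delta)$ bound then reduces to implementing the group shortest path within that budget.
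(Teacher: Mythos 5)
Your proposal is correct and takes essentially the same route as the paper's proof: both reduce the integer program to Gomory's group minimization over the corner cone at the LP-optimal vertex $v$, use the irreducibility of vertices (Theorem \ref{GomoryVertTh}) together with the width hypothesis (via the argument of Theorem \ref{SimplexKWidthTh} with $\Delta$ in place of $\delta(A)$) to place the group optimum inside $S = v + conv(0\, B) \subseteq P$, and then invoke the Gomory--Hu $O(n\Delta)$ group algorithm. The only difference is that you spell out steps the paper leaves implicit --- nonnegativity of the reduced costs $w^\top = c^\top \hat A^{-1}$, the bound $\sum_k y^*_k \leq \Delta - 1$ from $\prod_k(1+y^*_k) \leq \Delta$, and the transfer of optimality from $C \cap \mathbb{Z}^n$ to $P \cap \mathbb{Z}^n$ --- which only strengthens the write-up.
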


\begin{proof}
We have already proved that $S \subseteq P$, where $S = v + conv((0\:B))$ and $B = - (\Delta-1) {\hat A}^{-1}$ (see the previous Theorem). It is easy to see that $max\{ c^\top x :\: x \in P \cap \mathbb{Z}^n \} = max\{ c^\top x :\: x \in S \cap \mathbb{Z}^n \}$. Additionally, from Theorem \ref{GomoryVertTh}, it follows that all vertices of the system \eqref{CornerSystem} are in the set $S$. But, we already have a remarkable algorithm proposed by Gomory, Hu \cite{Gomory_Relation,Hu,HuGroupMin} for an integer optimization in systems of the type \eqref{GomorySystem} or \eqref{CornerSystem} with the computational complexity $O(n \Delta)$.
\end{proof}

\begin{corollary}
Let $A \in \mathbb{Z}^{(n+1) \times n}$, $b \in \mathbb{Z}^{n+1}$, $c \in \mathbb{Z}^n$ $P=P(A,b)$ be a simplex. If $w(P(A,b)) \geq \Delta(A)-1$, then $P(A,b) \cap \mathbb{Z}^{n} \not= \emptyset$ and there is an algorithm with the computational complexity $O(n \Delta)$ that solves the integer problem $max\{ c^\top x :\: x \in P \cap \mathbb{Z}^n \}$.
\end{corollary}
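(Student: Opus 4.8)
The plan is to obtain this corollary as an immediate consequence of the preceding one by a monotonicity observation on the determinant appearing in the hypothesis. The two statements differ only in the width threshold: the previous corollary requires $w(P) \geq \Delta - 1$, where $\Delta = |det(A_{J\,*})|$ is tied to the objective $c$ through the LP-optimal vertex $v$, whereas here the threshold is the objective-independent quantity $\Delta(A) - 1$. So I would reduce the claim to showing that the present hypothesis $w(P) \geq \Delta(A) - 1$ implies the hypothesis $w(P) \geq \Delta - 1$ of its predecessor.

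First I would fix an optimal vertex $v \in vert(P)$ of $max\{c^\top x :\: x \in P\}$ and set $J = \{j :\: A_{j\,*} v = b_j\}$, exactly as in the previous corollary. Since $P \subseteq \mathbb{R}^n$ is a simplex cut out by the $n+1$ inequalities $A x \leq b$, every vertex of $P$ is the intersection of exactly $n$ of its $n+1$ facets; hence $|J| = n$ and $A_{J\,*}$ is a nonsingular $n \times n$ submatrix of $A$. Thus $\Delta = |det(A_{J\,*})|$ is one of the maximal-order subdeterminants of $A$ over which $\Delta(A)$ is taken as a maximum.

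The key step is then the elementary inequality $\Delta = |det(A_{J\,*})| \leq \Delta(A)$, which holds by the very definition of $\Delta(A)$ as the largest absolute value of an $r(A) \times r(A) = n \times n$ subdeterminant of $A$ (note $r(A) = n$ since $P$ is full-dimensional). Consequently $\Delta(A) - 1 \geq \Delta - 1$, and the present hypothesis yields $w(P) \geq \Delta(A) - 1 \geq \Delta - 1$. Applying the previous corollary verbatim then gives both $P \cap \mathbb{Z}^n \neq \emptyset$ and the $O(n\Delta)$ Gomory--Hu group-minimization algorithm for $max\{c^\top x :\: x \in P \cap \mathbb{Z}^n\}$; since $\Delta \leq \Delta(A)$, the stated complexity bound holds a fortiori.

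I do not anticipate a genuine obstacle, since the present statement merely strengthens the hypothesis of its predecessor while keeping the conclusion. The single point deserving care is the assertion that the active set $J$ at every vertex of the simplex has size exactly $n$ — that is, that the $n+1$ defining inequalities are irredundant and meet in general position at each vertex. This is precisely the defining property of an $n$-simplex with $n+1$ facets, and it is what guarantees that $A_{J\,*}$ is square and invertible, so that $\Delta$ is indeed a maximal-order subdeterminant of $A$ and the comparison with $\Delta(A)$ is legitimate.
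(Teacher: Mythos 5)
Your proposal is correct and matches the paper's intent exactly: the paper states this corollary without proof, as an immediate consequence of the preceding corollary, and the only needed observation is precisely yours — that $A_{J\,*}$ is a nonsingular $n\times n$ submatrix of $A$ at any vertex of the simplex, so $\Delta = |det(A_{J\,*})| \leq \Delta(A)$ and the hypothesis $w(P) \geq \Delta(A)-1$ subsumes $w(P) \geq \Delta-1$. Your care about $|J|=n$ and the a fortiori complexity bound $O(n\Delta) $ with $\Delta \le \Delta(A)$ is exactly the right bookkeeping.
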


\section{Integer Programs with Almost Unimodular Matrices}

We will use Theorem \ref{MainTh1} to prove the following result.
\begin{theorem}\label{AlUnTh}
Let $A \in \mathbb{Z}^{m \times n}$  be \emph{almost unimodular}, $b \in
\mathbb{Z}^m$, $c \in \mathbb{Z}^n$. Then, there is a
polynomial-time algorithm to solve $max\{c^\top x :\:  x
\in P(A,b) \cap \mathbb{Z}^n \}$.
\end{theorem}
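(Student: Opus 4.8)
The plan is to reduce the integer program for an almost unimodular $A$ to the bimodular situation covered by Theorem~\ref{MainTh1}, whose item~5 already solves $\max\{c^\top x : Ax \le b,\ x \in \mathbb{Z}^n\}$ in polynomial time for a matrix all of whose $n \times n$ minors lie in $\{-2,\dots,2\}$ and are nonzero. Since an almost unimodular matrix has $\Delta(A) \le 2$, every $r(A) \times r(A)$ minor already lies in $\{-2,\dots,2\}$, so the only obstructions between our matrix and the hypotheses of Theorem~\ref{MainTh1} are (i) that $P$ be full-dimensional and $A$ have full column rank, and (ii) that no top-order minor vanish. I would first dispose of (i): if $\ker A \ne \{0\}$ then $P$ contains a line, so the relaxation is either unbounded (which we detect and report) or we may quotient by the rational lineality space, sending $\mathbb{Z}^n \cap \ker A$ to a coordinate sublattice by a unimodular change of variables and dropping the dimension; and if $P$ has empty interior then some inequality is an implicit equation whose primitive normal lets us eliminate a variable by a unimodular substitution. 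Because $\Delta$, $\delta$, and the $(r(A)-1)$-minor condition are defined through minors invariant under these unimodular reductions, the reduced matrix is again almost unimodular, and after finitely many steps we may assume $r(A) = n$ and $P$ full-dimensional. Solving the relaxation in polynomial time via Khachiyan's algorithm yields an optimal vertex $v^*$ with an active nonsingular submatrix $\hat A = A_{J,\ast}$, $|\det \hat A| \in \{1,2\}$; if $|\det\hat A| = 1$ then $\hat A v^* = b_J$ forces $v^* \in \mathbb{Z}^n$ and we are done.

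The substantive case is $|\det \hat A| = 2$. If $A$ has no vanishing $n \times n$ minor we simply invoke item~5 of Theorem~\ref{MainTh1} and finish. The real work, and the main obstacle, is that an almost unimodular matrix genuinely can have zero $n\times n$ minors arising from \emph{essential} constraints (for instance a row equal to the sum of two others), so item~5 does not apply verbatim. My plan here is to localize to the corner cone $N(v^*) = \{x : \hat A x \le b_J\}$, whose defining matrix is nonsingular, and to solve $\max\{c^\top x : x \in N(v^*) \cap \mathbb{Z}^n\}$ by Gomory's group method. Dual feasibility at $v^*$ gives $c = \hat A^\top \lambda$ with $\lambda \ge 0$, so $v^*$ maximizes $c^\top x$ over $N(v^*)$; and since all $(n-1)\times(n-1)$ minors of $\hat A$ are $0$ or $\pm1$, its Smith normal form is $\operatorname{diag}(1,\dots,1,2)$. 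Hence the relevant group is $\mathbb{Z}/2\mathbb{Z}$, the analysis behind Theorem~\ref{GomoryVertTh} and Lemma~\ref{GomoryExistenceLm} leaves at most two irreducible lattice points in the corner, and the corner optimum is computable in $O(n\Delta) = O(n)$ time.

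The crux is then to certify that this corner optimum is actually feasible in $P$, i.e.\ that it satisfies the inequalities outside $J$; equivalently, to show that the single nontrivial coset of $\mathbb{Z}/2\mathbb{Z}$ produces a lattice point lying inside $P$ and not merely inside $N(v^*)$. I expect this to be exactly the step where the strengthening from \emph{bimodular} to \emph{almost unimodular} is indispensable: the hypothesis that every $(n-1)\times(n-1)$ minor is $0$ or $\pm1$ rigidly constrains how a non-basic row can interact with the $\pm2$ basic determinant, and I would argue by a parity computation, expanding each non-basic row in the dual basis $\hat A^{-1}$ and using $|\det \hat A| = 2$ together with the integrality of $b$, that the second candidate point cannot violate a non-basic constraint without contradicting either $\Delta(A) \le 2$ or the $(n-1)$-minor condition. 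Should this local certification resist a clean proof, the fallback is to run the full machinery of Theorem~\ref{MainTh1} (items 1--4) on the bases meeting the optimal face, using item~4 to confine the integer optimum to an edge of some $N(u)$ and item~3 to optimize the finitely many resulting candidate directions; on either route, the removal of the zero $n\times n$ minors is the one genuinely delicate point.
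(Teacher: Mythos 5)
Your proposal follows the paper's skeleton up to the decisive point: both check emptiness and solve the LP relaxation, both localize at the corner cone $N(v)$ of the optimal LP vertex, and both lean on Theorem~\ref{MainTh1} to relate the integer optimum over $P$ to $N(v)$. But the two arguments then diverge, and the step where yours diverges is exactly where it breaks. You propose to solve the corner problem $\max\{c^\top x : x \in N(v)\cap\mathbb{Z}^n\}$ by Gomory's group method (your Smith-form computation $\mathrm{diag}(1,\dots,1,2)$ is correct, since $\det\hat A=\pm2$ forces some nonzero cofactor and the almost-unimodular hypothesis makes all $(n-1)$-minors lie in $\{0,\pm1\}$), and then to \emph{certify} that the corner optimizer lies in $P$. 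That certification claim is false as stated, not merely unproven: a corner optimizer need not satisfy the non-basic constraints. Take $A$ with rows $(1,1),(1,-1),(-1,0),(0,-1)$, $b=(1,0,0,0)$, $c=(1,1)$; this $A$ is almost unimodular, the LP optimum is $v=(1/2,1/2)$ with basis determinant $-2$, and the corner IP optimizers are all integer points on $x_1+x_2=1$, $x_1\le x_2$, i.e.\ $(0,1),(-1,2),(-2,3),\dots$, almost all of which violate $x\ge 0$. What Theorem~\ref{MainTh1} gives is only that \emph{some} corner optimizer is feasible; no parity argument on the single nontrivial coset of $\mathbb{Z}/2\mathbb{Z}$ can rule out that the group method returns one of the infeasible ones (in the example the group objective is degenerate, $\lambda=(1,0)$, so the group problem has ties and the outcome depends on tie-breaking). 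Selecting a feasible optimizer is the actual content of the theorem, and your proposal leaves it open.

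The paper resolves precisely this selection problem by a different use of the $(n-1)$-minor condition. It takes from Theorem~\ref{MainTh1} that a feasible integer optimizer $v^*$ lies on an edge $L^*$ of $N(v)$, and then observes that appending a unit row $e_k^\top$ to the $(n-1)\times n$ integral subsystem defining an edge produces a unimodular matrix (its determinant is, up to sign, an $(n-1)$-minor of $A$), so any single-point intersection of an edge of $N(v)$ with a hyperplane $x_k=\lceil v_k\rceil$ or $x_k=\lfloor v_k\rfloor$ is integral. Hence the $2n$ slices $\overline{B}_k$, $\underline{B_k}$ have integral vertices, $v^*$ is recovered by $2n$ auxiliary LPs, and no feasibility certification of an arbitrary corner optimizer is ever needed. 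Your fallback does not repair the gap either: item~3 of Theorem~\ref{MainTh1} applies only to objectives that are rows of $A$ (an arbitrary $c$ need not be one), item~4 quantifies over all vertices $u$ of $P$, and any plan involving enumeration of edge directions of $N(v)$ collides with the paper's own Theorem~\ref{ExpEdTh1}, which exhibits bimodular corner cones with exponentially many edges.
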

\begin{proof}
Let $P = P(A,b)$. We can check non-emptiness of $P \cap \mathbb{Z}^n$ by a polynomial time algorithm using Theorem \ref{MainTh1}. Then, if $P \cap \mathbb{Z}^n \not= \emptyset$, we need to find a vertex $v$ of $P$, that is an optimal solution of the relaxed problem $max\{c^\top x : A x \leq b,\  x \in \mathbb{Q}^n \}$. To this end, one can use the  Khachiyan's  polynomial algorithm for linear programming \cite{Khachiyan}. If $v$ is integral, then it is an optimal solution of the integer problem. If not, we consider the shifted cone $N(v) = \{ x \in \mathbb{R}^n :\: A_{J\,*} x \leq b \}$, where $J = \{k :\: A_{k\,*} v = b \}$.

By Theorem \ref{MainTh1}, $max\{c^\top x : A x \leq b,\  x \in \mathbb{Z}^n \} = max\{ c^\top x :  x \in N(v),\, x \in \mathbb{Z}^n \}$ and an optimal integral point $v^*$ lies on some edge of $N(v)$.

Let $\overline{B}_k$ be the set $N(v) \cap \{ x \in \mathbb{R}^n :\: x_k = \lceil v_k \rceil \}$ and $\underline{B_k}$ be the set $N(v) \cap \{ x \in \mathbb{R}^n :\: x_k = \lfloor v_k \rfloor \}$.

Now, we are going to prove that there exists $k$, $1 \leq k \leq n$ such that $v^*$ is the optimal solution of the linear problem $max\{c^\top x :\: x \in \overline{B}_k \}$ or problem $max\{c^\top x :\: x \in \underline{B_k} \}$.

Let $L$ be an edge of $N(v)$. If $|L \cap \overline{B}_k| = 1$ or $|L \cap \underline{B_k}| = 1$ for some $k$ and we know that $L \cap \overline{B}_k = {u}$ or $L \cap \underline{B_k} = {u}$, then $u \in \mathbb{Z}^n$. Indeed, assume that $L$ is induced by an integer subsystem of the system $Ax \leq b$ denoted by $\hat A x \leq \hat b$. Additionally, we can assume that $\hat A$ is $(n-1)\times n$ matrix. Then, the point $u$ is induced by the system $\dbinom{\hat A}{0^{k-1}\: 1\: 0^{n-k}} x = \dbinom{\hat b}{\sigma}$, where $\sigma$ is $\lceil v_k \rceil$ or $\lfloor v_k \rfloor$. By assumptions of the theorem, the matrix $\dbinom{\hat A}{0^{k-1}\: 1\: 0^{n-k}}$ is unimodular, so $u$ is integer.

Hence, we have only two possibilities for $vert(\overline{B}_k)$ (the same is true for $vert(\underline{B_k})$ ):

1.  $vert(\overline{B}_k) = \{v\}$. This is the case, when $v \in \{ x \in \mathbb{R}^n :\: x_k = \lceil v_k \rceil \}$, so no new vertices were created.

2. $vert(\overline{B}_k) \subset \mathbb{Z}^n$. Indeed, if $v \notin \{ x \in \mathbb{R}^n :\: x_k = \lceil v_k \rceil \}$, then new vertices must be generated. Each vertex of $\overline{B}_k$ is the result of an intersection of some edge of $N(v)$ with the hyperplane $\{ x \in \mathbb{R}^n :\: x_k = \lceil v_k \rceil \}$. By the previous, all these intersections are integer.

Now, let $L^*$ be an edge of $N(v)$ that contains an optimal point $v^*$.

There are few possible cases for some fixed $k$:

1. $L^* \subseteq \overline{B}_k$. Then, it is easy to see that an optimal solution of the linear problem $max\{c^\top x :\: x \in \overline{B}_k \}$ is a point $v$.

2. $L^* \cap \overline{B}_k = \emptyset$. In this case, the linear problem $max\{c^\top x :\: x \in \overline{B}_k \}$ can be inconsistent or it can have an integer or a rational solution. By the previous, if a solution is rational, then it is the vertex $v$.

3. $|L^* \cap \overline{B}_k| = 1$. This is the case of an intersection of a ray and a hyperplane. More precisely, we can prove that $L^* \cap \overline{B}_k = \{v^*\}$. Let $L^* \cap \overline{B}_k = \{u\}$ and $u \not= v^*$. By the previous results, the point $u$ must be integral. Hence $v^* \in [v,\,u]$, because $L^*$ is an edge of $N(v)$, and the objective function only can be increased by the edge $L^*$. But $[v,\,u]$ can't have any other integer points except $u$ because $u_k = \lceil v_k \rceil$ or $u_k = \lfloor v_k \rfloor$. So $u = v^*$.

4. The same items are true for $L^* \cap \underline{B_k}$.

\medskip

Since $L^*$ contains $v^* \in \mathbb{Z}^n$, then $\exists k$, $1 \leq k \leq n$, such that $|L^* \cap \overline{B}_k| = 1$ or $|L^* \cap \underline{B_k}| = 1$. Hence, by the previous, we have $L^* \cap \overline{B}_k = \{v^*\}$ or $L^* \cap \underline{B_k} = \{v^*\}$. If this is true, then $v \notin \{ x \in \mathbb{R}^n :\: x_k = \lceil v_k \rceil \} \cup \{ x \in \mathbb{R}^n :\: x_k = \lfloor v_k \rfloor \}$ and all vertices of $\overline{B}_k$ or $\underline{B_k}$ resp. are integer.

In conclusion, we note that our final algorithm consists one running of a linear programming algorithm to find the vertex $v$ as an optimal solution of the linear problem $max\{ c^\top x :\: A x \leq b\}$, and $2 n$ times running of a linear programming algorithm to solve the problems $max\{ c^\top x :\: x \in \overline{B}_k \}$ and $max\{ c^\top x :\: x \in\underline{ B_k} \}$, for each $1 \leq k \leq n$. One of this solutions must be $v^*$ that can be recognized by integrality and maximality of the objective function.

Since any linear programming problem can be solved by a polynomial-time algorithm due to Khachiyan \cite{Khachiyan}, our algorithm is polynomial too.
\end{proof}

Let us give some generalization of this proof. The matrix $A$ is called \emph{$k$-almost unimodular}, for $1 \leq k \leq n-1$, if $\Delta(A) \leq 2$ and any $(r(A)-k)\times(r(A)-k)$ sub-matrix has a determinant from the set $\{0,\pm 1\}$.

\begin{theorem}\label{AlUnTh}
Let $A \in \mathbb{Z}^{m \times n}$  be \emph{$k$-almost unimodular}, $b \in
\mathbb{Z}^m$, $c \in \mathbb{Z}^n$. Then, there is a
polynomial-time algorithm for fixed $k$ to solve $max\{c^\top x :\: x \in P(A,b) \cap \mathbb{Z}^n \}$.
\end{theorem}

\begin{proof}
To proof this theorem, we need to consider sets $\overline{B}_J = N(v) \cap \{x \in \mathbb{R}^n :\: x_J = \lceil v_J \rceil \}$ and $\underline{B}_J = N(v) \cap \{x \in \mathbb{R}^n :\: x_J = \lfloor v_J \rfloor \}$, for all subsets $J \subset \overline{1,n}$, $|J| = k$. There are $\binom{n}{k}$ of these subsets.

Again, we have following possibilities for $\overline{B}_J$:

1. $vert(\overline{B}_J) = \{v\}$.

2. $v^* \notin vert(\overline{B}_J) \subset \mathbb{Z}^n$.

3. $v^* \in vert(\overline{B}_J) \subset \mathbb{Z}^n$.

4. The same items are true for $\underline{B}_J$.

Since $v^* \in \mathbb{Z}^n$, then $\exists J$, such that $v^* \in vert(\overline{B}_J)$ or $v^* \in vert(\underline{B}_J)$. If this is true, then $v \notin \{ x \in \mathbb{R}^n :\: x_J = \lceil v_J \rceil \} \cup \{ x \in \mathbb{R}^n :\: x_J = \lfloor v_J \rfloor \}$, so the linear programs $max\{ c^\top x :\: x \in \overline{B}_J\}$ or $max\{ c^\top x :\: x \in \underline{B}_J\}$ have $v^*$ as an optimal solution.

Our final algorithm consists one running of a linear programming algorithm to find the vertex $v$, and $2 \binom{n}{k}$ times running of a linear programming algorithm to solve the problems $max\{ c^\top x :\: x \in \overline{B}_J \}$ and $max\{ c^\top x :\: x \in\underline{ B_J} \}$, for each $J \subset \overline{1,n}$, $|J| = k$. One of this solutions must be $v^*$ that can be recognized by integrality and maximality of the objective function.
\end{proof}

\section{Examples of the Corner Polyhedrons with an Exponential Number of Edges}

Theorem \ref{MainTh1} implies that there is a polynomial-time algorithm for solving the integer program $max\{c^\top x :\: x \in P(A,b) \cap \mathbb{Z}^n \}$, when $\Delta(A) \leq 2$ and the number of edges in the shifted cone $N(v)$ is bounded by a polynomial, where $v$ is an optimal solution of the relaxed linear problem. But, there is an example of an exponential number of edges in the shifted cone $P(A,b)$ with $\Delta(A) \leq 2$.

Consider the set $D = conv\{ {\{0,\,1\}}^{n} \} = \{x \in \mathbb{R}^n : 0 \leq x_i \leq 1,\ \ i \in \overline{1,n} \}$.

We use the homogenization and find the matrix $A_D$, such that: $\{ \binom{x}{x_0} : A_D \binom{x}{x_0} \leq 0 \} = \{ \binom{x}{x_0} : x \in D,\ x_0 \geq 0 \}$.
Finally, we create the matrix $A_D^{\prime}$ by multiplying any column of $A_D$ by $2$. $A_D$ is unimodular so $A_D^{\prime}$ is bimodular and we have the following theorem:

\begin{theorem}\label{ExpEdTh1}
The cone $\{ x \in \mathbb{R}^{n+1} : A_D^{\prime} x \leq 0 \}$ has an exponential number of edges. Any of its edges can be represented by the bimodular submatrix of $A_D^\prime$.
\end{theorem}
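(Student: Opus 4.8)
The plan is to observe that scaling one column of $A_D$ by $2$ is merely a linear change of variables, and that such a change preserves the entire face lattice of a cone; the edge count is therefore inherited directly from the cube. I would first write $A_D$ out. Ordering the coordinates as $(x_1,\dots,x_n,x_0)$, the homogenization of $D=[0,1]^n$ is
$$C=\{(x,x_0):\ 0\le x_i\le x_0,\ i\in\overline{1,n}\},$$
cut out by the rows $r_i^-=(-e_i\mid 0)$ (encoding $x_i\ge 0$) and $r_i^+=(e_i\mid -1)$ (encoding $x_i\le x_0$); the row $-x_0\le 0$ is redundant. Since $(x,x_0)\in C$ together with $-(x,x_0)\in C$ forces $x=0,\ x_0=0$, the cone $C$ is pointed, so its $1$-dimensional faces are genuine extreme rays.

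Next I would record that $A_D'=A_D T$, where $T$ is the identity matrix except for the single diagonal entry $2$ in the scaled column. As $T$ is invertible, $y\mapsto Ty$ is a linear isomorphism of $\mathbb{R}^{n+1}$ mapping $\{y:\ A_D'y\le 0\}$ bijectively onto $C$; it preserves pointedness and the face lattice, so the two cones have equally many edges. This is precisely the step that uses the freedom of ``any column'': changing which column is doubled only changes $T$, never the conclusion.

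I would then count the edges of $C$. Its section by $\{x_0=1\}$ is exactly the cube $D$, and this section is transversal since the only point of $C$ with $x_0=0$ is the origin; for a pointed cone the extreme rays correspond bijectively to the vertices of a transversal section. As $D$ has the $2^n$ vertices $\mathbf 1_S$, $S\subseteq\overline{1,n}$, the cone $C$ has exactly the $2^n$ extreme rays $cone((\mathbf 1_S,1))$, which is exponential in $n$. To keep this self-contained I would verify extremality directly: at a point $t(\mathbf 1_S,1)$ with $t>0$ the tight rows are precisely $r_i^+$ for $i\in S$ and $r_i^-$ for $i\notin S$, an $n$-element set whose common solution set is the line $\mathbb{R}\cdot(\mathbf 1_S,1)$, and these $n$ rows are linearly independent.

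For the second assertion I would transport these tight systems through $T$: the edge of $\{y:\ A_D'y\le 0\}$ indexed by $S$ is cut out by the corresponding $n$ rows of $A_D'$, an $n\times(n+1)$ submatrix of rank $n$. As a submatrix of the bimodular matrix $A_D'$ it is itself bimodular, which already yields the statement; moreover, for $S\ne\emptyset$, deleting a column $x_j$ with $j\in S$ leaves an $n\times n$ minor equal to $\pm 2$, so every nontrivial edge is in fact the solution of a genuinely bimodular subsystem. I expect no serious obstacle here; the only points needing care are the assertion that a linear isomorphism preserves the face lattice (so that the count is literally inherited from the cube) and the bookkeeping confirming that the $n$ selected rows are independent with the claimed ray as their kernel.
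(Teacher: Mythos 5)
You should know at the outset that the paper offers no proof of Theorem \ref{ExpEdTh1} at all: it is stated as a bare observation at the end of the section, immediately after the construction of $A_D^{\prime}$. So there is no authors' argument to compare yours against step by step; what your proposal does is supply the omitted proof, and in substance it is correct. Your route --- interpret the construction as the homogenization cone $C=\{(x,x_0):\ 0\le x_i\le x_0,\ i\in\overline{1,n}\}$, check that $C$ is pointed, identify its extreme rays with the $2^n$ vertices of the cube via the transversal section $\{x_0=1\}$ (with the direct tight-row verification making this self-contained), and transport everything through the invertible column scaling $T$, which preserves the face lattice --- is the natural one and almost surely what the authors had in mind.

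One justification, however, needs repair. The inference ``a submatrix of the bimodular matrix $A_D^{\prime}$ is itself bimodular'' is false for general bimodular matrices: bimodularity constrains only the minors of order $r(A)$, and lower-order minors need not be bounded by them (for instance $\left(\begin{smallmatrix}3&1\\1&1\end{smallmatrix}\right)$ is bimodular, while its $1\times 1$ submatrix $(3)$ is not). What makes the step valid in this instance is that $A_D$ is \emph{totally} unimodular --- each of its rows is either $-e_i$ or $e_i-e_{n+1}$, so $A_D^{\top}$ is a digraph incidence matrix --- hence every minor of every order of $A_D^{\prime}$ lies in $\{0,\pm 1,\pm 2\}$, and that property does pass to submatrices. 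Your explicit computation of the $n\times n$ minors of the tight subsystem in effect establishes exactly this, so the flaw is in the stated reason rather than in the mathematics; you should simply replace the appeal to bimodularity of $A_D^{\prime}$ by the total unimodularity of $A_D$. A second, smaller slip: deleting a column $x_j$ with $j\in S$ does not always leave a $\pm 2$ minor --- if the doubled column is $x_{j^*}$ with $j^*\le n$ and $S=\{j^*\}$, the surviving columns are all unscaled and the minor is $\pm 1$ --- but this affects only your ``moreover'' strengthening, not the statement of the theorem.
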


\section{Acknowledgments}
The authors wish to express special thanks for the invaluable assistance to P.M.~Pardalos, A.J.~Chirkov, D.S.~Malyshev, N.U.~Zolotykh and V.N.~Shevchenko.

The work is supported by LATNA Laboratory NRU HSE RF government grant ag. 11.G34.31.0057 and Russian Foundation for Basic Research grant ag. 15-01-06249 A.


\begin{thebibliography}{99}

\bibitem{AlekseevZaharova} Alekseev,~V.E., Zakharova,~D.V.: Independence sets of graphs with bounded minors of the augmented incidence matrix. Discrete Analysis and Operations Research. 17(1), 3-10 (2010)

\bibitem{AlekseevMalyshev1} Alekseev, V.E., Lozin, V.V., Malyshev, D.S., Milanic, M.: The maximum independent set problem in planar graphs. Lecture Notes in Computer
Science 5162, 96-107 (2008)

\bibitem{AlekseevMalyshev2} Alekseev, V.E., Malyshev, D.S.: Planar graph classes with the independent set problem solvable in polynomial time. Journal of Applied and
Industrial Mathematics 3, 1-4 (2009)

\bibitem{FlatnThviaLocalThOfBanach} Banaszczyk,~W., Litvak,~A.E., Pajor,~A., Szarek,~S.J.: The flatness theorem for non-symmetric convex bodies via the local theory of Banach spaces. Mathematics of operations research. 24(3), 728-750 (1999)

\bibitem{Banasz} Banaszczyk,W.: Inequalities for convex bodies and polar reciprocal lattices in $\mathbb{R}^{n}$ II: Application of K-convexity. Discrete \& Comput. Geom. 16(3), 305-311 (1996)

\bibitem{PadbergConj_AlmostUnimod} Cornu\'ejols,~G., Zuluaga,~L.F.: On Padberg's conjecture about almost totally unimodular matrices. Oper. Res. Lett. 27(3), 97-99 (2000)

\bibitem{Dadush}Dadush,~D.: Transference Theorems in the Geometry of Numbers. http://cs.nyu.edu/courses/spring13/CSCI-GA.3033-013/lectures/transference.pptx

\bibitem{YKK} Yemelichev,~V.A., Kovalev,~M.M., Kravtsov,~M.K.: Polytopes, graphs and optimization. Cambridge university press, (1984)

\bibitem{Gomory_Relation} Gomory.~R.~E.: On the Relation Between Integer and Non-Integer Solutions to Linear Programs. Proc. Natl. Acad. Sci., USA. 53 (2), 260-265 (1965)

\bibitem{Gomory_Comb} Gomory.~R.~E.: Some Polyhedra Related to Combinatorial Problems. J. Linear Algebra, and Its Applications. 2 (4), 451-558 (1969)

\bibitem{Gomory_IntFaces} Gomory.~R.~E.: Integer Faces of a Polyhedron. Proc. Natl. Acad. Sci., USA. 57(1), 16-18 (1967)

\bibitem{Gribanov_General} Gribanov,~D.~V.: The Flatness Theorem for Some Class of Polytopes and Searching an Integer Point. Springer Proceedings in Mathematics \& Statistics. Models, Algorithms and Technologies for Network Analysis. 104, 37-45 (2013)

\bibitem{GrossmanKikarniSchoc} Grossman,~J.W., Kilkarni,~D.M., Schochetman,~I.E.: On the minors of an incidence matrix and its Smith normal form. Linear Algebra Appl. 218, 213-224 (1995)

\bibitem{Haase_Ziegler} Haase,~C., Ziegler,~G.: On the Maximal Width of Empty Lattice Simplices. Europ. J. Combinatorics. 21, 111-119 (2000)

\bibitem{Hu} Hu,~T.~C.: Integer programming and network flows. Addison-Wesley Publishing Company (1970)

\bibitem{Panos} Horst,~R., Pardalos,~Panos~M. (Eds.): Handbook of Global Optimization. Springer US (1995)

\bibitem{HuGroupMin} Hu,~T.~C.: On the Asymptotic Integer Algorithm. MRC Report 946, University of Wisconsin, Madison. (1968)

\bibitem{KannLovasz} Kannan, R., Lov\'asz,~L.: Covering minima and lattice-point-free convex bodies. Ann. Math. 128, 577-602 (1988)

\bibitem{Kantor} Kantor,~J.M.: On the width of lattice-free simplexes. Cornell University Library. (1997)

http://arxiv.org/abs/alg-geom/9709026v1
\bibitem{Khachiyan} Khachiyan,~L.G.: Polynomial algorithms in linear programming. Computational Mathematics and Mathematical Physics. 20(1), 53-72 (1980)

\bibitem{Khinchine} Khinchine~A.: A quantitative formulation of Kronecker's theory of approximation. Izvestiya Akademii Nauk SSR Seriya Matematika. 12, 113-122 (1948) [in russian]

\bibitem{Kotnyek} Kotnyek Bal\'azs: A generalization of totally unimodular and
network matrices. PhD thesis. Published by ProQuest LLC 2014.

\bibitem{Malyshev1} Malyshev D.S.: Continued sets of boundary classes of graphs for colorability problems. Discrete Analysis and Operations Research. 16 (5), 41-51 (2009) 

\bibitem{Malyshev2} Malyshev D.S.: On minimal hard classes of graphs. Discrete Analysis and Operations Research. 16 (6), 43-51 (2009)

\bibitem{Malyshev3} Korpelainen N., Lozin V.V., Malyshev
D.S., Tiskin A.: Boundary properties of graphs for algorithmic graph problems.
Theoretical Computer Science. 412, 3545-3554 (2011)

\bibitem{Malyshev4} Malyshev D.S.: The impact of the
growth rate of the packing number of graphs on the computational complexity
of the independent set problem. Discrete Mathematics and Applications. 23,
245-249 (2013)

\bibitem{Malyshev5} Malysev D.S.: Analysis of the impact of the number of edges in connected graphs on the computational complexity of independent set problem. Journal of Applies and Industrial Mathematics. 6, 97-99 (2012)

\bibitem{Malyshev6} Malyshev D.S.: Classes of graphs
critical for the edge list-ranking problem. Journal of Applied and Industrial
Mathematics. 8, 245-255 (2014)

\bibitem{Malyshev7} Malyshev, D.S.: Classes of subcubic planar graphs for which the
independent set problem is polynomially solvable. Journal of Applied and
Industrial Mathematics. 7, 537-548 (2013)

\bibitem{Malyshev8} Malyshev, D.S.: A study of the boundary graph classes for col-
orability problems. Journal of Applied and Industrial Mathematics. 2, 221-228
(2013)

\bibitem{Padberg} Padberg,~M.: The boolean quadric polytope: Some characteristics, facets and relatives. Mathematical Programming. 45(1-3), 139-172 (1989) 

\bibitem{Rudelson} Rudelson,~M.: Distances between non-symmetric convex bodies and the $MM^*$ -estimate. Positivity. 4(2), 161-178 (2000)

\bibitem{IntroToEmptyLatticeSimpl} Seb\"o,~A.: An Introduction to Empty Lattice Simplexes. In: Cornu\'ejols,~G., Burkard,~R.R., Woeginger,~R.E. LNCS. 1610, 400-414 (1999)

\bibitem{ShevMono} Shevchenko,~V.N.: Qualitative Topics in Integer Linear Programming (Translations of Mathematical Monographs). AMS. (1996)

\bibitem{Schrijver1998} Schrijver,~A.: Theory of Linear and Integer Programming. WileyInterscience series in discrete mathematics. John Wiley \& Sons (1998)

\bibitem{VeselovChirkovBimodEn2008} Veselov,~S.I., Chirkov,~A.J.: Integer program with bimodular matrix. Discrete Optimization. 6(2), 220-222 (2009)

\bibitem{Ziegler} Ziegler,~G.: Lectures on polytopes. Springer-Verlag, GTM 152 (1996)

\end{thebibliography}
\end{document}